\newcommand{\bfu}{\mathbf{u}}
\newcommand{\bfv}{\mathbf{v}}
\newcommand{\bfx}{\mathbf{x}}
\newcommand{\bfy}{\mathbf{y}}
\newcommand{\bfa}{\mathbf{a}}
\newcommand{\bfb}{\mathbf{b}}
\newcommand{\bfc}{\mathbf{c}}
\newcommand{\bfp}{\mathbf{p}}
\newcommand{\bfk}{\mathbf{k}}
\newcommand{\bfl}{\mathbf{l}}
\newcommand{\bfZ}{\mathbb{Z}}
\newcommand{\comb}[2]{{C}^{#2}_{#1}}
\newcommand{\poly}{\mathrm{poly}}
\newcommand{\sgn}{\mathrm{sgn}}
\newcommand{\tr}[1]{\mathrm{tr}\!\left(#1\right)}
\newtheorem{lemma}{Lemma}
\newtheorem{theorem}{Theorem}
\newtheorem{proposition}{Proposition}% 
\begin{document}
\title{Low-Overhead Tailoring and Learning of Noise in Graph States}
\author{Guedong Park}
\affiliation{NextQuantum Innovation Research Center, Department of Physics and Astronomy, Seoul National University, Seoul, 08826, Korea}

\author{Jinzhao Sun}
\email{jinzhao.sun.phys@gmail.com}
\affiliation{School of Physical and Chemical Sciences, Queen Mary University of London, London E1 4NS, United Kingdom }

\author{Hyunseok Jeong}
\email{jeongh@snu.ac.kr}
\affiliation{NextQuantum Innovation Research Center, Department of Physics and Astronomy, Seoul National University, Seoul, 08826, Korea}

\begin{abstract}
Graph and hypergraph states are important resource states for realizing universal quantum computation and diverse non-local physical phenomena.
However, noise learning in such states is challenging due to their large entanglement and magic. This work establishes a low Clifford hierarchy circuit-based scheme for tailoring and learning noise in (hyper)graph states generated by multiple controlled-Z gates. 
The key is to convert the noisy input state into a diagonal form and derive the convolution equation of diagonal noise rate, which proves to have a lower resource overhead. Such a diagonal form can be used in real quantum simulation by using the same conversion technique into the graph state input. We demonstrate that a single-depth Bell measurement is sufficient in our scheme for arbitrary graph states, while noise tailoring can be done via Pauli operations. After that, we suggest the Walsh-Hadamard transform and some approximation method for decoding the convolution equation to estimate the tailored noise. 
We prove that constant sampling and polynomial time complexities are sufficient to guarantee bounded noise estimation error with respect to the $l_2$-norm, which usually requires exponential complexity with existing noise estimation methods. The polynomial complexities are maintained even for the more stringent $l_1$ approximation under the sparse noise assumption. Conclusively, we propose novel methods for characterizing the noise properties of an entangled state at a lower cost than that of state generation. Moreover, compared with state verification methods, we can attain richer information on noise rates and enable noise mitigation, thereby providing guarantees for the preparation of graph states.

\end{abstract}
\maketitle

\section{Introduction}

Quantum entanglement~\cite{horodecki2009,chitambar2016} is a representative resource of quantum advantages over classical computing~\cite{bennett1993, bennett2014}. Graph states~\cite {lee2024}, more generally, the hypergraph states~\cite{zhu2019} can host rich entanglement and non-stabilizerness~\cite{bravyi2019,howard2017,chen2024}. They play an indispensable role in a broad range of applications, including measurement-based quantum computing (MBQC)~\cite{briegel2009,miller2016,takeuchi2019}, error correcting code~\cite{google2023,fowler2012}, quantum metrology~\cite{shettell2020}, and fundamental studies on multipartite entanglement~\cite{yoshida2016,gachechiladze2016}. However, the preparation of such highly entangled resource states inevitably suffers from physical (and even logical~\cite{aharonov2025}) noise. Accordingly, noise learning~\cite{chen2023} for graph state inputs becomes an essential step toward developing improved graph state generation protocols~\cite{huet2025,thomas2024}. Once the underlying noise characteristics are identified, we can employ post-error mitigation~\cite{pivateau2021,suzuki2022,wallman2018}, or noise-adaptive error correcting code~\cite{layden2020,basak2025} to substantially reduce the noise of graph states. Moreover, the noise can be tailored into a more tractable form (e.g., Pauli noise)~\cite{harper2020,emerson2007}, facilitating analytical treatment and further optimization.

%The noise mitigation procedure generally requires two prior steps called noise benchmarking~\cite{wallman2016,emerson2005,emerson2007,wallman2018}: (1) tailor the noise into a simpler form such as Pauli noise by twirling~\cite{suzuki2022,pivateau2021}; (2)  learn the tailored noise structure~\cite{flammia2020,chen2023,endo2018} to be recovered~\cite{endo2018,suzuki2022,lostaglio2021}. 
%Therefore, tailoring and detecting the noise~\cite{flammia2011,huang2020,cerezo2020} of the graph state naturally become essential parts. With the knowledge of noise, we can design better error corrections~\cite{chiani2020,debroy2021,bell2014} or apply mitigation techniques~\cite{endo2018,cai2021,hugginsv2021,suzuki2022,shaib2023,pivateau2021} to correct or mitigate errors; for example, probabilistic error cancellation~\cite{lostaglio2021,suzuki2022,tsubouchi2024} to mitigate noise in a state (or gate).

However, despite its importance, establishing efficient schemes for noise tailoring and learning remains an open challenge.
Existing noise tailoring techniques are generally limited to twirling the gate noise~\cite{emerson2005,wallman2016,liu2024}, making them difficult to apply across diverse graph-state generation platforms such as vertex fusion~\cite{lee2023} or distillation~\cite{litinski2019}.
Moreover, learning the noise characteristics is even more demanding than verification~\cite{li2023,tanizawa2023,huang2024_sv,zhu2019,morimae2017,takeuchi2018,donnell2025}, which typically does not estimate many noise parameters.
State or gate-set tomography~\cite{endo2018,pivateau2021,gross2010} can, in principle, reconstruct the full noise structure, but its sampling and time complexities scale exponentially with the number of qubits~\cite{gross2010}.
Alternative approaches for learning Pauli noise~\cite{harper2021,flammia2020,harper2020} utilize the Walsh–Hadamard transform (WHT) over the Pauli eigenvalue spectrum.
Nevertheless, these methods usually assume random sparse noise~\cite{harper2021} and its gate-independence during repetitive Pauli operations for cyclic benchmarking~\cite{flammia2020}, which limits their applicability in realistic graph-state platforms.
In this work, we propose a noise tailoring framework that diagonalizes a noisy multi-qubit (hyper)graph state and efficient learning scheme for its diagonal components in the low-error regime. 
Learning these diagonal elements is sufficient for graph state-based quantum simulation, since the same tailoring process can be inserted into the noisy graph state inputs.
Our protocol achieves this by employing gates from a lower level of the Clifford hierarchy~\cite{anderson2024} than those used in the target graph state, together with depth-one transversal \textsc{CNOT} operations.
Although the algorithm itself is assumed to be noiseless, it yields a definite improvement in physical overhead, requiring significantly fewer entangling gates compared to direct measurements in the target graph-state bases.

To do so, we prepare the Bell-like measurement~\cite{catani2018} circuit acting on two copies of the noisy graph states. This circuit transforms the noise of each copy into some dephasing noise and then obtains the final measurement outcome following the self-convolution of the dephasing rate.
% and samples the measurement outcome following some distribution. 
%The distribution of measurement outcomes is expressed as a non-linear mapping of the true noise distribution. By doing so, we bypass the difficulty of estimating the Pauli spectrum~\cite{harper2021,chen2023}. 
We propose two strategies, the Walsh-Hadamard transform (WHT) and the convolution power method (CPM), for decoding the self-convolution to reveal the dephasing noise rate. Compared to the previous method~\cite{chen2023,flammia2020,harper2021}, we use WHT for the self-convolution of the target dephasing rate, which requires neither Pauli eigenvalue estimation nor cyclic benchmarking of the noise channel. The WHT allows us to reach the exact distribution of the dephasing noise, while its running time may be inefficient. On the other hand, the CPM approximates the noise by truncating the higher-order component in the perturbed expansion of the solution, and is valid for the low infidelity cases. We prove that constant sampling and polynomial time complexities are required to guarantee the noise estimation error with respect to the $l_2$-norm, overcoming the complexity of the conventional WHT to Pauli eigenvalues~\cite{flammia2020,harper2021}, given relatively small noise rates. The polynomial efficiency is preserved even in the $l_1$ approximation, which is a much stricter requirement for noise learning, under the sparsity of a given noise~\cite{harper2020,berg2023}.
Our results provide guarantees for noise characterization in preparing graph states and yield deeper insights into the nature of noise in highly entangled systems.

\section{Preliminaries}

Throughout this paper, we consider $n$-qubit quantum system. 
$\mathcal{P}_n\equiv \left\{\pm iI,\pm iX,\pm iY,\pm iZ\right\}^{\otimes n}$ denotes the $n$-qubit Pauli group~\cite{gottesman1998}. Given a single-qubit Pauli operator $P$, we denote an $n$-qubit Pauli operator $P^{\bfa}\equiv \bigotimes_{i=1}^{n} P^{a_i}$, where $\bfa\in \bfZ^n_2$. An arbitrary Pauli operator (Weyl-Heisenberg operator) in $\mathcal{P}_n/ \mathbb{Z}_4$ is denoted as $T_{\bfa}\equiv \bigotimes_{i=1}^{n}i^{a_{ix}a_{iz}}X^{a_{ix}}Z^{a_{iz}}$, where $\bfa=(\bfa_x,\bfa_z)\in \bfZ^{2n}_2$. Also, we define the $k(\in \mathbb{N})$-th Clifford hierarchy~\cite{anderson2024} ${\rm Cl}^{(k)}_n$ as the set of operators,
\begin{align}
    \mathrm{Cl}^{(k)}_n=\left\{U\in {\rm SU}(2^n)\,|\,\forall E\in \mathcal{P}_n,\,UEU^{\dagger}\in \mathrm{Cl}^{(k-1)}_n\right\},
\end{align}
and $\mathcal{P}_n=\mathrm{Cl}^{(1)}_n$. $\mathrm{Cl}^{(2)}_n$ is simply denoted as ${\rm Cl}_n$, called as Clifford group~\cite{aaronson2004,huang2020}. The stabilizer state is the pure state generated by $U\in {\rm Cl}_n$ operation to the computational basis. We can define the magic states~\cite{bravyi2005,bravyi2012} (or non-stabilizer states~\cite{leone2022}), which are outputs of non-Clifford gates to $\frac{1}{\sqrt{2^n}}\left(\ket{0}+\ket{1}\right)^{\otimes n}\;(\equiv \ket{+}^{\otimes n})$. 
Moreover, we define $k$th-order controlled-Z gate that is for the computational basis $\forall \ket{\bfx} (\bfx\in \bfZ^n_2)$, 
\begin{align}
    C_{\left\{i_1,i_2,\ldots,i_k\right\}}Z\ket{\bfx}=(-1)^{x_{i_1}x_{i_2}\ldots x_{i_k}}\ket{\bfx},
\end{align}
where $i_j\in [n]$, $j\in [k]$. We say it as the multiple controlled-Z gate if the specification of $k$ is unnecessary.  When $k=3$, it is a controlled-controlled-Z (\textsc{CCZ})-gate and is denoted  as $\textsc{CCZ}_{\left\{i_1,i_2,i_3\right\}}$. When $k=1~(2\;{\rm resp.})$, it is $Z$(\textsc{CZ})-gate. We now consider a hypergraph $G(V,E)$ where $V=[n]$ and $E$ are subsets in $V$ with a maximal size $k\ge 2$~\cite{bretto2013}. The $k$th-order hypergraph state $\ket{G(V,E)}$, which we shall denote $\ket{\psi}$ for brevity, is defined as~\cite{rossi2013}, 
\begin{align}\label{main:def_hypergraph_state}
    \ket{\psi}&\equiv \prod_{A\in E}C_A Z\ket{+}^{\otimes n}=\frac{1}{\sqrt{2^n}}\sum_{\bfx\in \bfZ^n_2}(-1)^{P_\psi(\bfx)}\ket{\bfx}.
\end{align}
When $k=2$, we call it a graph state~\cite{bell2014}. Here, $P_{\psi}$ denotes the corresponding $k$th-degree Boolean polynomial. Reuse of index $k$ from Clifford hierarchy to hypergraph states is reasonable since $U_E\equiv\prod_{A\in E}C_A Z$ is of $k$th-order Clifford hierarchy. See App.~\ref{app:CCZ_hierarchy} for its proof.  Equation~\eqref{main:def_hypergraph_state} is the target state of our main interest. Moreover, we introduce the orthonormal bases generated from the hypergraph state, $\left\{\ket{\psi_{\bfa}}\equiv Z^{\bfa}\ket{\psi}\right\}_{\bfa\in \bfZ^n_2}$.

\section{Tailoring of states}

In practice, the hypergraph state is affected by noise, $\rho=\mathcal{N}(\ket{\psi}\bra{\psi})$, where $\mathcal{N}$ is a noise channel unknown due to environmental interactions. We also refer to $\rho$ as $k$th-order state, following the order of $\ket{\psi}$. This work objects to estimating the dephasing noise distribution $p:\bfZ^n_2\rightarrow \mathbb{R}_{\ge0}$ such that $p_{\bfa}=\braket{\psi_\bfa|\rho|\psi_\bfa}\;(\bfa\in \bfZ^n_2)$. We have two reasons for this. The first is that the support size of $p$ sufficiently bounds the degree of noise dispersion~\cite{gross2010}. A detailed expression is followed by,

\begin{proposition}\label{prop:dominant_rank}
    Given $\epsilon>0$ and $p_{\bfa}=\braket{\psi_{\bfa}|\rho|\psi_{\bfa}}\; (\bfa\in \bfZ^n_2)$, we suppose there exists a subset $A\subset \bfZ^n_2$ satisfying $\sum_{\bfa\in A}p_{\bfa}>1-\epsilon$. Then the spectral decomposition $\rho=\sum_{\bfa\in \bfZ^n_2}\lambda_{\bfa}\ket{\lambda_{\bfa}}\bra{\lambda_{\bfa}}$ also has subset $B\subset \bfZ^n_2$ such that $|B|=|A|$ and  $\sum_{\bfa\in B}\lambda_{\bfa}>1-\epsilon$.  
\end{proposition}

\begin{proof}
     We relocate the elements of $p$ with descending order, and then denote $A^{\downarrow}$ as the set of binary vector indices of the $|A|$ number of largest values of descending order. Then it is definite that $\sum_{\bfa\in A^{\downarrow}}p_{\bfa}\ge\sum_{\bfa\in A}p_{\bfa}>1-\epsilon$. Next, we see that $p_{\bfa}=\sum_{\bfb}|\braket{\psi_\bfa|\lambda_{\bfb}}|^2\lambda_{\bfb}$. Then we can rewrite it as $p=B\lambda$ where $B_{\bfa,\bfb}=|\braket{\psi_\bfa|\lambda_{\bfb}}|^2$ is a bistochastic matrix. Therefore, $\lambda$ majorizes $p$~\cite{arnold2012}. We set $B$ as the set of binary vector indices of the first $|A|$ number of descending values of $\lambda$. Then $|B|=|A|$, and we conclude that $\sum_{\bfa\in B}\lambda_\bfa\ge\sum_{\bfa\in A^{\downarrow}}p_{\bfa}>1-\epsilon$, where the first inequality holds by the majorization property. 
\end{proof}

Second, we can transform $\rho$ into a dephased form; the off-diagonal elements are removed before being injected into the quantum simulation. The latter is concretized as following statement which generalize the results of Ref.~\cite{piveteau2021error}, 

% We leave the formal definition to Methods.
% G: I think below definition is enough. I agree that presenting the \begin{definition} category here is not looking good though. 

% \begin{definition}\label{main:def_dephasing_noise}
%G: I used \ket{\phi} here to leave the \psi for the hypergraph pure states
%Before showing this, we first clarify the meaning of the dephasing-like form of the noise. Given the desired magic state $\ket{\phi}=U\ket{+}^{\otimes n}$ for some magic gate $U$, and probability distribution $(p_{\bfa})_{\bfa\in \bfZ^n_2}$, we define the noisy state $\rho$ that is $\ket{\phi}$ with dephasing-like noise with respect to $p$ as 
    %\begin{equation}
        %\label{main:def_dephasing_noise}
        %\rho_{p}\equiv \sum_{\bfa\in\bfZ^n_2}p_{\bfa}\ket{\phi_{\bfa}}\bra{\phi_{\bfa}}.
    %\end{equation}
     %Here, $\left\{\ket{\phi_\bfa}=UZ^{\bfa}\ket{+}^{\otimes n}\right\}_{\bfa\in \bfZ^n_2}$ is a set of orthonormal bases containing $\ket{\phi_{\mathbf{0}}}=\ket{\phi}$. 
%Given that $U$ is diagonal with respect to computational bases, the noise is indeed the dephasing noise. Therefore, dephasing-like noise of the hypergraph state ($\phi=\psi$) reduces to dephasing noise. 
%Next, we introduce a proposition that generalizes the statement in Ref.~\cite{pivateau2021}. 

\begin{proposition}\label{main:dephasing_operation}
    (i) Given an ideal state $\ket{\phi}=U\ket{+}^{\otimes n}$  where the unitary $U$ is of up to $k$th-order Clifford hierarchy, suppose its noisy state $\rho$ can be accessed. Then there exist randomized operations of $(k-1)$th-order Clifford hierarhcy that transform $\rho$ into $\rho_{p}\equiv \sum_{\bfa\in \bfZ^n_2}p_{\bfa}\ket{\phi_\bfa}\bra{\phi_\bfa}$, where $\ket{\phi_\bfa}\equiv UZ^{\bfa}\ket{+}^{\otimes n}$ and $p$ is some probability distribution. Moreover, $p_{\bfa}$ is given by $p_{\bfa}=\braket{\phi_{\bfa}|\rho_p|\phi_{\bfa}}=\braket{\phi_{\bfa}|\rho|\phi_{\bfa}},~\forall \bfa\in \bfZ^n_2$.\\
    (ii) When $\phi$ is a stabilizer state and  undergoes Pauli noise, $\rho=\rho_p$ and such randomized operation is not needed.
\end{proposition}

\begin{proof}
    
We recall that ${\rm Cl}_n^{(k)}$ is denoted as a set of unitaries of $k$th-order Clifford hierarchy~\cite{anderson2024}. We note that ${\rm Cl}_n^{(1)}=\mathcal{P}_n$ and ${\rm Cl}_n^{(2)}={\rm Cl}_n$.

We first prove the noise tailoring part, the Prop.~\ref{main:dephasing_operation}. Suppose the desired state is $\ket{\phi}=U\ket{+}$, where $U\in {\rm Cl}_n^{(k)}$. We set $X_{\phi}^{\bfa}\equiv UX^{\bfa}U^{\dag}$ ($\bfa\in \bfZ^n_2$) and $\mathcal{U}_t(\cdot)\equiv \frac{1}{2^n}\sum_{\bfa\in\bfZ^n_2}X_{\phi}^{\bfa}(\cdot)X_{\phi}^{\bfa}$. Note that $UX^{\bfa}U^{\dag}\in {\rm Cl}^{(k-1)}_n$~\cite{cui2017}, and it can be found within $\mathcal{O}(n^3)$-time. If $k=3\;(2\;\rm resp.)$, then it follows that $\mathcal{U}_t$ is a randomized Clifford (Pauli) operation. Next. we show that such channel tailors the noise of $\rho$. Recalling $\left\{\ket{\phi_{\bfa}}\right\}_{\bfa\in\bfZ^n_2}$ forms orthonormal bases, the output state is expressed by, 
    \begin{align}
        \mathcal{U}_t(\rho)&=\frac{1}{2^n}\sum_{\bfa,\bfb,\bfb'\in\bfZ^n_2}\braket{\phi_{\bfb}|\rho|\phi_{\bfb'}}X_{\phi}^{\bfa}\ket{\phi_{\bfb}}\bra{\phi_{\bfb'}}X_{\phi}^{\bfa}\nonumber\\&=\frac{1}{2^n}\sum_{\bfa,\bfb,\bfb'\in\bfZ^n_2}(-1)^{\bfa\cdot(\bfb+\bfb')}\braket{\phi_{\bfb}|\rho|\phi_{\bfb'}}\ket{\phi_{\bfb}}\bra{\phi_{\bfb'}}\nonumber\\&=\sum_{\bfb,\bfb'\in\bfZ^n_2}\delta_{\bfb,\bfb'}\braket{\phi_{\bfb}|\rho|\phi_{\bfb'}}\ket{\phi_{\bfb}}\bra{\phi_{\bfb'}}\nonumber\\&=\sum_{\bfb\in\bfZ^n_2}\braket{\phi_{\bfb}|\rho|\phi_{\bfb}}\ket{\phi_{\bfb}}\bra{\phi_{\bfb}}=\rho_{p}\label{eq:orthogonal_fidelity},
    \end{align}
    where the probability distribution $p$ is given as $p_{\bfa}=\braket{\phi_{\bfa}|\rho|\phi_{\bfa}}$ and second equality utilizes the fact that $UX^{\bfa}U^{\dag}\ket{\phi_{\bfb}}=UX^{\bfa}U^{\dag}UZ^{\bfb}\ket{+}^{\otimes n}=UX^{\bfa}Z^{\bfb}\ket{+}^{\otimes n}=(-1)^{\bfa\cdot \bfb}UZ^{\bfb}\ket{+}^{\otimes n}=(-1)^{\bfa\cdot \bfb}\ket{\phi_{\bfb}}$. Finally, we note that $\braket{\phi_{\bfa}|\rho_p|\phi_{\bfa}}=\braket{\phi_{\bfa}|\rho|\phi_{\bfa}}$ from Eq.~\eqref{eq:orthogonal_fidelity}.

%\sun{Could you write a sentence about the motivation of this paragraph (ie (1) why you need to the discuss the following thing, (2) 'necessary' for what? for implementing $U$)?}
   If $k>3$, we need at most $\mathcal{O}(n^{k-1})$ number of $(k-1)$th-order multiple controlled $Z$ gate ($X$ gate resp.), each of which can be implemented via $\mathcal{O}(k)$-sized Clifford+T gates~\cite{nielsen2001}. Even though $E$ has mixed orders, the required number of gates is $\mathcal{O}(3n^2+4n^3+\cdots+kn^{k-1})=\mathcal{O}(kn^{k-1})$. We also note that this result can be applied to every $\ket{\phi}=U\ket{+}^{\otimes n}$ where $U\in {\rm Cl}_n^{(3)}$. 

   If $\ket{\phi}$ is a stabilizer state, then every Pauli noise gives a phase flip of stabilizer generators~\cite{nielsen2001,aaronson2004}. Therefore, all Pauli noise can be regarded as $UZ^{\bfa}$ operation for some $\bfa\in \bfZ^n_2$, and then $\rho=\rho_p$. The proof of Prop.~\ref{main:dephasing_operation} is completed.
\end{proof}

Here, $\ket{\phi}$ was used for general statements over hypergraph states $\ket{\psi}$. Unitaries of the third-order Clifford hierarchy generate third-order hypergraph states. For this case, such randomized operations are Clifford operations. Graph states under the Pauli noise do not even require such a randomized operation, and hence knowing $\rho_p$ directly implies knowing the $\rho$.

Proposition~\ref{main:dephasing_operation} indicates that even if the input magic state has an unknown noise, we can use such twirling to encode it as a dephased magic state, not harming the diagonal elements, referred to as noise tailoring.
This is analogous to the noise tailoring for the noisy channels~\cite{wallman2016,harper2020,liu2024,emerson2007}. However, Proposition~\ref{main:dephasing_operation} focuses on tailoring of noise in the state input, for which its preparation scheme can be general (i.e., generated from magic state distillation or flag-qubit method)~\cite{litinski2019,jhuang2024,bravyi2012,lee2024,chamberland2019}.
% \sun{What do you mean 'its preparation scheme may be GENERAL'? I think preparation is more 'special'?
% } \hlgt{$\Rightarrow$ G: Because input state $\rho$ can be made from the gate-based, or distillation, or flag-qubit method etc. We can tailor all cases.}
An additional difference from an operational point of view is that for noisy states, we only need to apply the twirling operation after the noisy channel. In contrast, in channel twirling, the operation appears both before and after the target noise channel~\cite{liu2024,tsubouchi2024}. A recent technique~\cite{lee2025_benchmarking} offers a method to make the target noisy magic state block-diagonalized. Meanwhile, our method guarantees the complete diagonalization using commuting gate resources. Noise tailoring leads to a significant reduction of target noise complexity, and we shall see that this will lead to a systematic algorithm to learn the noise of graph states. 
%\delete{We shall see that applying this noise tailoring for graph states will lead to a significant resource reduction compared to twirling of the noisy channels (as it did not incorporate the properties of the state).}
%\sun{If the above senetcne is removed, could we justify  why we consider 'noise tailoring for graph states' rather than 'twirling'?  

% I think this sentence is unnecessary. giving the tailoring operation back and forth means our scheme has rougly 1/2 gate resource reduction. It could be 'significant' but partially duplicated with the above sentence.  This paragraph would be enough just to give the difference and uniqueness. 

% \sun{Can I also say this 'tailing of noise is more relevant for stabiliser states or magic states in the context of fault-tolerant regime. '}
%\sun{"always twirls the "state"???
%to dephasing-like form "}

%\sun{I wanted to confirm: can CLifford twirling do the same? that is, it will also not HARM the diagonal elements?} \hlgt{G: Good point. The twirling noise of the gate into a diagonal form is already solved~\cite{liu2024}, hence we need to appeal that we can twirl the arbitrary states and only put the twirling gate after the true gate.}

Now, let us assume the input state $\rho$ is tailored into $\rho_p$. One of the best ways for estimating the distribution $p$ of noisy $k$th-order hypergraph state $\rho_p$ is to measure $\rho$ with respect to orthonormal bases $\left\{\ket{\psi_\bfa}\right\}_{\bfa\in \bfZ^n_2}$. 
However, it requires $\mathcal{O}(n^k)$ number of two-qubit gate resources, similarly to prepare $\ket{\psi}$. An alternative way is to estimate $p_{\bfa}$ by the shadow tomography~\cite{aronson2018,huang2020,koh2022,bertoni2024,zhang2023,hu2023,wang2024}.  
However, this protocol requires estimating $p_{\bfa}$ for all $\bfa\in \bfZ^n_2$, and calculating classical shadow~\cite{huang2020,bertoni2024} becomes inefficient when we have to learn the noise of a multi-qubit entangled magic state.   
%Let us consider the $L$-number of 3rd-orde hypergraph states~\cite{rossi2013,zhu2019}. We need to estimate $2^n$-number of fidelities, i.e. fidelity with $\ket{\psi_{\bfa}}$ ($\bfa\in \bfZ^n_2$). It means that if we need to estimate $p$ up to l_1-norm $\mathcal{O}(\epsilon/L)$ to make affordable bias, we need to estimate each fidelity within the error $\mathcal{O}(1/(2^n L\epsilon^{-1}))$. Then the sampling complexity is at most $\mathcal{O}(n4^n L^2\epsilon^{-2})$~\cite{huang2020}. Also, the calculation of classical shadows takes at most $\mathcal{O}(2^n)$ time~\cite{bremner2016}, hence its time complexity is at most $\mathcal{O}(n 8^n L^2\epsilon^{-2})$, which is too large for even small $n$. Second, even when $\sigma_p$ has a low rank, we must search which $\bfa$ has a significant noise factor among $2^n$-number of bases. Hence it still requires exponentially much time and memories. 
\begin{figure}[t]
    \includegraphics[width=\columnwidth]{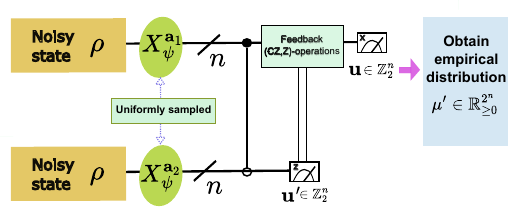}
    \caption{Schematic diagram of noise tailoring and learning of noisy 3rd-order hypergraph state $\rho$. Two qubit gates of $X^{\bfa_1\;(\bfa_2)}_\psi$ can be merged into the feedback section.}
    \label{fig:detection diagram}
\end{figure}

\section{Noise convolution}

To address these problems in the conventional techniques presented in the previous section, we propose an efficient noise learning scheme specialized to hypergraph states. As we shall see, the sampling complexity and time complexity are significantly reduced while achieving similar (or even better)
accuracy. As an initial step, we show in Theorem~\ref{thm:main_2markovian_eq} that by measuring two copies of $\rho_{p}$ with the unitaries of one-step lower Clifford hierarchy and transversal \textsc{CNOT} gates applied to the state, we can obtain a quadratic equation of $p$. 
% The previous tailoring channel is also a Clifford circuit and linear. Hence we can line up the two circuits. 
% Such arguments can be encapsulated in the theorem below. 

\begin{theorem} [Noise convolution for graph states]\label{thm:main_2markovian_eq} Suppose that $\rho$ is a noisy $k(\in \mathbb{N})$th-order hypergraph state of $n$ qubits and $p$ is the diagonal elements such that $p_{\bfa}=\braket{\psi_\bfa|\rho|\psi_{\bfa}}$. Then, we can obtain the measurement outcomes $\bfu\in \bfZ^n_2$ following the probability distribution $(\mu_{\bfu})_{\bfu\in \bfZ^n_2}$ satisfying
    \begin{align}\label{eq:thm_main_2markovian_eq}
        (p\ast p)_{\bfu}\equiv \sum_{\bfa\in \bfZ^n_2}p_{\bfa}p_{\bfa+\bfu}=\mu_{\bfu},
\end{align} using operations of $(k-1)$th-order Clifford hierarchy, transversal \textsc{CNOT}, and Pauli measurements.
\end{theorem}

Equation~\eqref{eq:thm_main_2markovian_eq} is the core equation throughout this paper, which we call the convoluted noise equation. This term naturally comes from the fact that the left side $p\ast p$ is the self-convolution (or auto-correlation)~\cite{bu2023} of $p$. The nonlinearity of such an equation can be interpreted as a trade-off when we use one-step lower Clifford hierarchy resources. When $\ket{\psi}$ is a graph state, the process requires only a single-depth transversal \textsc{CNOT}s, which has a significant reduction over $\mathcal{O}(n)$-depth syndrome measurement-based noise learning~\cite{Wagner2023,nielsen2001}.

We outline the quantum measurement algorithm for obtaining the empirical distribution of $\mu$, which schematically summarizes the proof of Thm.~\ref{thm:main_2markovian_eq}. A detailed proof will be shown at the last of this section. We define the directional derivative of $P_\psi(\bfx)$ as the Boolean polynomial $P_{\psi}(\bfx|\bfx')$ as follows $P_\psi(\bfx+\bfx')+P_{\psi}(\bfx)$. Then the scheme is shown as follows: In a $2n$-qubit system, we prepare two copies of $\rho$, $\rho \otimes \rho$.  Then we uniformly sample $\bfa_1,\bfa_2\in \bfZ^n_2$, enact $X_{\psi}^{\bfa_1\;(\bfa_2\;{\rm resp.})}\equiv U^{1(2)}_EX^{\bfa_1(\bfa_2)}U_E^{1(2)\dagger}$ to the first (second) component ($U_E\equiv \prod_{A\in E}C_{A}Z$), and then $\prod_{i=1}^{n}{\textsc{CNOT}}_{i,i+n}$ follows. We measure the $n+1\sim2n$-th qubits with Pauli $Z$-basis and obtain the measurement outcome $\bfu'\in \bfZ^n_2$. As feedback, we enact another multiple controlled-$Z$ gates $V_{\psi,\bfu'}$ such that $V_{\psi,\bfu'}\ket{\bfx}=(-1)^{P_{\psi}(\bfx|\bfu')}\ket{\bfx}$, to the remaining qubits. The output resembles the two successive operations of $U_E$~\cite{catani2018} with a dephasing noise, where two $U_E$ commute with such noise and vanish, and hence becomes $\ket{+}^{\otimes n}$ followed by two successive dephasing noises $p$. Finally, $X$-basis measurement outcome $\bfu$ follows the distribution of $ \mu_{\bfu}=\sum_{\bfa\in \bfZ^n_2}p_{\bfa}p_{\bfa+\bfu}$, which is the noise rate of successive dephasing noises. 

An important remark is that the multiple controlled $Z$ gates of $X^{\bfa_1\;(\bfa_2)}_\psi$ can be merged into $V_{G,\bfu'}$ to become $V_{G,\bfu'}^{\bfa_1,\bfa_2}$, because the control gates of $X^{\bfa_1\;(\bfa_2)}_\psi$ merely yields additional \textsc{CZ}-gates while commuting with the transversal \textsc{CNOT} part. Thus, we only need to give multiple controlled-Z gate operations before the final measurement.
The validity of such merging technique is presented in App.~\ref{app:gate_merging}. 

Figure~\ref{fig:detection diagram} diagrammatically illustrates the noise tailoring and sampling under the distribution $\mu$ for the third-order hypergraph state cases. In App.~\ref{app:gate_merging}, we showed that in this case, our algorithm only uses Clifford resources and hence does not exceed on average the $\Theta(\frac{n^2}{4})$ number (at most $\mathcal{O}(\frac{n^2}{2})$)  of two-qubit Clifford gate requirements without any magic gates, while direct measurement with the bases $\left\{\ket{\psi_\bfa}\bra{\psi_\bfa}\right\}_{\bfa\in \bfZ^n_2}$ requires $\mathcal{O}(n^3)$ Clifford+T resources~\cite{bravyi2016i}. Furthermore, most practical hypergraph states to achieve universality~\cite{miller2016}, such as the Union Jack state~\cite{zhu2019},  can be implemented on 2D-local connectivity~\cite{jhuang2024}, in which case only constant-depth \textsc{CZ}-gates are needed. For the graph state case, single-depth transversal \textsc{CNOT} gates are the only gate resources. The reason is that $X^{\bfa_1\;(\bfa_2)}_{\psi}$ is a Pauli operator which can be regarded as classical post-processing of measurement outcome. 

We end this section with a rigorous proof of Thm.~\ref{thm:main_2markovian_eq}.

\begin{proof}[Proof of Thm.~\ref{thm:main_2markovian_eq}]
     Let $\rho=\mathcal{N}(\ket{G(V,E)}\bra{G(V,E)})=\mathcal{N}(\ket{\psi}\bra{\psi})$ be an $n$-qubit hypergraph state having unknown noise $\mathcal{N}$. We also define $U=\prod_{A\in E}C_{A}Z$. By   Prop.~\ref{main:dephasing_operation}, we act the tailoring operation to evolve $\rho$ to $\rho_{p}$ with a given probability distribution $p$ over $\bfZ^n_2$. Then we start from this input state. This is a reasonable assumption because all operations in our algorithm are linear. Note that in this case $\ket{\psi_{\bfa}}=Z^{\bfa}\ket{\psi}$, since $U$ and Pauli $Z$ operations commute. Now, we append another $\rho_{p}$, and enact $\prod_{i=1}^{n}{\textsc{CNOT}}_{i,i+n}$, where ${\textsc{CNOT}}_{i,j}\;(i,j\in [n])$ refers to a \textsc{CNOT} gate from control $i$-th qubit to target $j$-th qubit.
The evolved state is expressed by, 
\begin{align}\label{eq:gadgetization}
    &\prod_{i=1}^{n}{\textsc{CNOT}}_{i,i+n}\left(\rho_p\otimes \rho_p\right)\prod_{i=1}^{n}{\textsc{CNOT}}_{i,i+n}\nonumber\\&=\frac{1}{2^{2n}}
    \sum_{\substack{\bfx,\bfx',\bfy,\bfy'\in \bfZ^n_2\\\bfa,\bfb\in\bfZ^n_2}}p_{\bfa}p_{\bfb}(-1)^{\bfa\cdot(\bfx+\bfy)+\bfb\cdot(\bfx'+\bfy')}\nonumber\\& \cdot(-1)^{P_{\psi}(\bfx)+P_{\psi}(\bfy)+P_{\psi}(\bfx')+P_{\psi}(\bfy')}\ket{\bfx(\bfx'+\bfx)}\bra{\bfy(\bfy'+\bfy)}\nonumber
    \\&= \frac{1}{2^{2n}}
    \sum_{\substack{\bfx,\bfx',\bfy,\bfy'\in \bfZ^n_2\\\bfa,\bfb\in\bfZ^n_2}}p_{\bfa}p_{\bfb}(-1)^{\bfa\cdot(\bfx+\bfx'+\bfy+\bfy')+\bfb\cdot(\bfx'+\bfx+\bfy'+\bfy)}\nonumber\\&\cdot(-1)^{P_{\psi}(\bfx)+P_{\psi}(\bfy)+P_{\psi}(\bfx+\bfx')+P_{\psi}(\bfy+\bfy')}\ket{\bfx\bfx'}\bra{\bfy\bfy'}\nonumber\\
    &=\frac{1}{2^{2n}}
    \sum_{\substack{\bfx,\bfx',\bfy,\bfy'\in \bfZ^n_2\\\bfa,\bfb\in\bfZ^n_2}}p_{\bfa}p_{\bfb}(-1)^{\bfa\cdot(\bfx+\bfx'+\bfy+\bfy')+\bfb\cdot(\bfx'+\bfx+\bfy'+\bfy)}\nonumber\\&\cdot (-1)^{P_{\psi}(\bfx|\bfx')+P_{\psi}(\bfy|\bfy')}\ket{\bfx\bfx'}\bra{\bfy\bfy'},
\end{align}
where $P_{\psi}(\bfx|\bfx')\equiv P_\psi(\bfx)+P_\psi(\bfx+\bfx')$ is a binary polynomial which has $(k-1)$th-degree for $\bfx$ variables. Because both $P_\psi(\bfx)$ and $P_{\psi}(\bfx+\bfx')$ have the same $k$th-degree monomials with $\bfx$, and hence they vanish after the summation. Next, we measure from $n+1$ to $2n$-th qubits in the Pauli-$Z$ bases, and suppose we got an outcome $\bfu'\in\bfZ^n_2$. The probability to get the outcome $\bfu'$, say $\rm Prob(\bfu')$, is as follows, 
{\small
\begin{align}\label{eq:prob_sub_outcome}  
    &{\rm Prob}(\bfu')\nonumber\\&=\tr{\prod_{i=1}^{n}{\textsc{CNOT}}_{i,i+n}\left(\rho_p\otimes \rho_p\right)\prod_{i=1}^{n}{\textsc{CNOT}}_{i,i+n}(I\otimes \ket{\bfu'}\bra{\bfu'})}\nonumber
\end{align}
\begin{align}
    &=\frac{1}{2^{2n}}
    \sum_{\substack{\bfx,\bfy\in \bfZ^n_2\\\bfa,\bfb\in\bfZ^n_2}}p_{\bfa}p_{\bfb}(-1)^{\bfa\cdot(\bfx+\bfy)+\bfb\cdot(\bfx+\bfy)+P_{\psi}(\bfx|\bfu')+P_{\psi}(\bfy|\bfu')}\delta_{\bfx,\bfy}\nonumber\\&
    =\frac{1}{2^{2n}} \sum_{\bfx,\bfa,\bfb\in \bfZ^n_2}p_{\bfa}p_{\bfb}=\frac{1}{2^n}, \forall \bfu'\in\bfZ^n_2.
\end{align} 
}

A remark is that when we run the scheme in a real situation, $\bfu'$ may not be sampled uniformly. Because we randomly choose $X^{\bfa}_\psi$ only once before each measurement. Hence, the input state to be measured is not $\rho_p$. However, the above reasoning is safe because we only consider the analytic output after the randomization. In other words, all processed operations are linear and then we may regard the ouput to be randomized over $\forall \bfa\in \bfZ_2^n$.

Let us get back on track. The projected state  $\rho_{p,\bfu'}$ over the remaining qubits becomes,
{\small
\begin{align}
    &\rho_{p,\bfu'}\nonumber\\&=\frac{1}{2^{n}}
    \sum_{\substack{\bfx,\bfy\in \bfZ^n_2\\\bfa,\bfb\in\bfZ^n_2}}p_{\bfa}p_{\bfb}(-1)^{(\bfa+\bfb)\cdot(\bfx+\bfy)+P_{\psi}(\bfx|\bfu')+P_{\psi}(\bfy|\bfu')}\ket{\bfx}\bra{\bfy}.
\end{align}
}

%\begin{align}\label{eq:prob_sub_outcome}
    %&\tr{\prod_{i=1}^{n}{\textsc{CNOT}}_{i,i+n}\left(\rho_p\otimes \rho_p\right)\prod_{i=1}^{n}{\textsc{CNOT}}_{i,i+n}(I\otimes \ket{\bfu'}\bra{\bfu'})}\nonumber\\&=\frac{1}{2^n}, \forall \bfu'\in\bfZ^n_2.
%\end{align} 

%$P_{G}(\bfx|\bfx')$ is of $(k-1)$-degree with at most $\mathcal{O}(\comb{n}{k})$ number of $k$th-order controlled-Z gates. 
Then we can find up to $\mathcal{O}(\comb{n}{k-1})$ number of $(k-1)$th-order \textsc{CZ} gates, or $\mathcal{O}(kn^{k-1})$ number of Clifford+T gates~\cite{nielsen2001} to post-process $V_{\psi,\bfu'}$ such that $V_{\psi,\bfu'}\ket{\bfx}\bra{\bfy}V_{\psi,\bfu'}= (-1)^{P_{\psi}(\bfx|\bfu')+P_{\psi}(\bfy|\bfu')}\ket{\bfx}\bra{\bfy}$ for $\forall \bfx,\bfy\in \bfZ^n_2$. If $k=3$, then only Clifford gates are necessary. If $k=2$, even \textsc{CZ} gates are not needed. After that, we finally obtain the evolved state,
\begin{align}
    \rho'_p\equiv \frac{1}{2^{n}}
    \sum_{\substack{\bfx,\bfy\in \bfZ^n_2\\\bfa,\bfb\in\bfZ^n_2}}p_{\bfa}p_{\bfb}(-1)^{(\bfa+\bfb)\cdot(\bfx+\bfy)}\ket{\bfx}\bra{\bfy}.
\end{align}
We note that this is a common output state for $\forall \bfu'\in\bfZ^n_2$, i.e. deterministic. Now, we measure the remaining qubits with a Pauli-$X$ basis. Suppose we obtain the final outcome as $\ket{+^{\bfu}}\equiv Z^{\bfu}\ket{+}^{\otimes n}$. Then its measurement probability $\mu_{\bfu}\equiv\braket{+^\bfu|\rho'_p|+^\bfu}$ is,
\begin{align}\label{eq:prob_equation}
    \mu_{\bfu}&=\frac{1}{4^n}\sum_{\substack{\bfx,\bfy\in \bfZ^n_2\\\bfa,\bfb\in\bfZ^n_2}}p_{\bfa}p_{\bfb}(-1)^{(\bfa+\bfb+\bfu)\cdot(\bfx+\bfy)}\nonumber\\&=\sum_{\bfa,\bfb\in\bfZ^n_2}p_{\bfa}p_{\bfb}\delta_{\bfa+\bfu,\bfb}=\sum_{\bfa\in\bfZ^n_2}p_{\bfa}p_{\bfa+\bfu}.
\end{align}
The proof is completed.
\end{proof}

\section{Decoding the self-convolution}

After the algorithm outlined in Theorem~\ref {thm:main_2markovian_eq}, a key question is what is the method for decoding the self-convolution in Eq.~\eqref{eq:thm_main_2markovian_eq}, to estimate the noise rates within a certain error $\epsilon$. We divide the decoding methods into two categories: Walsh-Hadamard transform (WHT) and convolution power method (CPM). 

\subsection{Walsh-Hadamard transform (WHT)}

We should sample the measurement outcomes following $\mu$ in Eq.~\eqref{eq:thm_main_2markovian_eq} sufficiently so that the empirical distribution of $\mu'$ becomes close to $\mu$. 
After we substitute $\mu'$ to $\mu$, we can solve the equation to find the solution $p'$ by taking the Walsh-Hadamard transform (WHT)~\cite{alman2023,scheibler2015} on both sides. To be specific, for $\bfb\in \bfZ^n_2$, we find that
\begin{align}\label{eq:process_FWHT}
    \hat{\mu}_{\bfb}\equiv \sum_{\bfu\in \bfZ^n_2}(-1)^{\bfb\cdot\bfu}\mu_{\bfu}&=\sum_{\bfu\in \bfZ^n_2}\sum_{\bfa\in \bfZ^n_2}(-1)^{\bfb\cdot \bfu}p_{\bfa}p_{\bfa+\bfu}\nonumber\\&=\sum_{\bfa,\bfc\in \bfZ^n_2}(-1)^{\bfb\cdot(\bfa+\bfc)}p_{\bfa}p_{\bfc}\nonumber\\&=\left(\sum_{\bfa\in \bfZ^n_2}(-1)^{\bfa\cdot \bfb}p_{\bfa}\right)^2\nonumber\\&=(\hat{p}_{\bfb})^2
\end{align}
where $\cdot$ means the binary inner product, and the square (root resp.) of the vector $\bfv\in \mathbb{R}^{2^n}$ is denoted as the vector of the squared (rooted) elements of $\bfv$. By using the Hadamard matrix $H=(H_{\bfa,\bfb})_{\bfa,\bfb\in \bfZ^n_2}$ that is defined as $H_{\bfa,\bfb}=(-1)^{\bfa\cdot \bfb}$, we obtain the following result,
\begin{align}\label{eq:main_exactsolution_fwht}
    p'=\frac{1}{2^n}H\sqrt{H\mu'}=\widehat{\sqrt{\widehat{\mu'}}}^{-1}\; (\widehat{\mu}\equiv H\mu).
\end{align}
Here, we also used the fact that $H^{-1}=\frac{1}{2^n}H$.
The hat notation is used for WHT and the square root means we take the square root of each element in the input. 

Equation~\eqref{eq:main_exactsolution_fwht} is well-defined only when all elements in $\widehat{p}=Hp$ are non-negative. Such a WH spectrum typically resides between $(1,1,1\ldots,1)$ of the pure case, to $(1,0,0,\ldots,0)$ of the totally mixed case. The reason is that if we define the matrix function $P(\tau)$ such that $P(\tau)_{\bfa,\bfb}\equiv \tau_{\bfa+\bfb}$, then $\widehat{\tau}$ becomes eigenvalue spectrum of $P$ with the eigenvectors $\eta^{(\bfc)}$ such that $\eta^{(c)}_{\bfa}\equiv (-1)^{\bfc\cdot\bfa}$.  Given that $\delta\le\frac{1}{2}$, $\forall\widehat{p}_{\bfa}\ge p_{\mathbf{0}}-(1-p_{\mathbf{0}})=1-2\delta\ge 0$. We note that $P(\mu)=P^2(p)\ge 0$ and then $\hat{\mu}$ is always non-negative. Additionally, we note that $p'$ is quasi-probability~\cite{mari2012}. In other words, it sums to $1$ since, 
\begin{align}
    \sum_{\bfb\in \bfZ_2^n}p_{\bfb}'=(\sqrt{H\mu'})_{\mathbf{0}}=\sqrt{(H\mu')_\mathbf{0}}=\sqrt{\sum_{\bfb}\mu'}=1
\end{align}
but could allow negative elements. Later, we will show how to efficiently peel off such negative elements while improving the accuracy with the target distribution.  

In general, the fast Walsh Hadamard transform (FWHT) algorithm enables us to perform the calculation of $p$ within $\mathcal{O}(n2^n)$~\cite{alman2023} time which can be reduced to $\mathcal{O}(|V|\log (|V|))$ if the support of $\mu'$ is known to belong to proper subspace $V\subset\bfZ^n_2$ (see App.~\ref{app_time_solving_noise_equation} for the proof).
Theoretically, we hope the solution $p'$ to be close enough to the true solution $p$ when $\mu'$ is close to the true measurement distribution $\mu$. Indeed, the following statement fulfills such requirements.
\begin{proposition}\label{main:thm_uniqueness}
    
    Suppose $p^{(1)},p^{(2)}$ be solutions in Eq.~\eqref{eq:main_exactsolution_fwht} ,corresponding to $\mu^{(1)},\mu^{(2)}$ respectively. 
    \\
    (i) We assume that $1-p^{(1)}_{\mathbf{0}},1-p^{(2)}_{\mathbf{0}}<\delta\ll\frac{1}{2}$. There exists $\epsilon_{\rm th}>0$ such that $\|\mu^{(1)}-\mu^{(2)}\|<\epsilon_{\rm th}$ implies we can obtain asymptotic bound of $\|p^{(1)}-p^{(2)}\|$ as
    \begin{align}\label{eq:asymptotic_p_dist}
     \frac{\|p^{(1)}-p^{(2)}\|_1}{\|\mu^{(1)}-\mu^{(2)}\|_1}&\lesssim\left(\frac{1}{2}+\delta\right)+\frac{2+5\delta}{8}\|\mu^{(1)}-\mu^{(2)}\|_1.
 \end{align}
 \\
    (ii) We assume that $1-p^{(1)}_{\mathbf{0}},1-p^{(2)}_{\mathbf{0}}<\delta<\frac{1}{2}-\Omega(1)$. Then,
    \begin{align}
        \frac{\|p^{(1)}-p^{(2)}\|_1}{\|\mu^{(1)}-\mu^{(2)}\|_1}=\mathcal{O}(2^n),\; {\rm and}\; \frac{\|p^{(1)}-p^{(2)}\|_2}{\|\mu^{(1)}-\mu^{(2)}\|_2}=\mathcal{O}(1).
    \end{align}
    \\
    (iii) $\|p-p'\|_2\le \mathcal{O}(\min_{\bfa}\widehat{\mu}_\bfa^{-1})\|\mu-\mu'\|_2$.

\end{proposition}

Proof and reasoning to show that $\epsilon_{\rm th}$ is independent on $n$ are deferred to App.~\ref{app_time_solving_noise_equation}. A naive calculation of Eq.~\eqref{eq:main_exactsolution_fwht} takes $\mathcal{O}(n2^n)$-time~\cite{scheibler2015} and thus may be inefficient. In the next subsection, we propose an alternative method, the convolution power method (CPM), to address this issue.

\subsection{Convolution power method (CPM)}\label{main_sec_CPM}

Given small noise, we can expand the solution with the convolution powers $(\mu\ast \mu\cdots\ast \mu)$~\cite{luchko2022} and sample the outcomes from each term to obtain the approximate solution from $p_{\rm approx}=\sum_{j=1}^{\infty}c_j\mu^{\ast j}$. Here, $\forall c_j\in \mathbb{R}$, and $\mu^{\ast j}\;(k\in \mathbb{N})$ means the $j$th-multiple convolutions of $\mu$ ($\mu^{\ast 0}\equiv \mathbf{1},\;\mu^{\ast 1}=\mu,\; \mu^{\ast 2}=\mu\ast \mu,\ldots$). Here, we briefly outline the analytical process used to obtain the approximate solution, while detailed explanations are provided in the App.~\ref{method:approx_solution}. Then we give the polynomial complexity arguments for CPM in the next section. 

We define $\bm{\delta}\equiv \mathbf{1}-p$, where $\bm{1}\equiv (1,0,0,\ldots,0)$. Then the convoluted noise equation becomes $(1-\bm\delta)\ast p=\mu$. We take the inverse convolution $p^{\ast -1}$ satisfying $p^{\ast -1}\ast p=\mathbf{1}$. The convolution forms a commutative and associative algebra, where $\mathbf{1}$ is the unity. Therefore, we obtain a natural geometric series formula, $(1-\bm{\delta})^{-1}=\sum_{l=0}^{\infty}\bm {\delta}^{\ast l}$. After that, the equation turns into 
\begin{align}
    p=(\mathbf{1}-\bm{\delta})^{\ast -1}\mu=\sum_{l=0}^{\infty}\bm{\delta}^{\ast l}\ast \mu
\end{align}
The rightmost term can be expressed with only the convolution between convolution powers of $p=\mathbf{1}-\bm{\delta}$ and $\mu$. 

We note that all these convolution terms reduce to the convolution between single $p^{\ast (0\;\rm or \;1)}$ and convolution powers of $\mu$, by using the fact that $p\ast p=\mu$ (e.g. $p^{\ast 3}\ast \mu=p\ast \mu^{\ast 1}$). In conclusion, we obtain the linear equation, 
\begin{align}
    &\mathbf{1}\ast p+\left(\sum_{j=0}^{\infty}c'_j\mu^{\ast j}\right)\ast p=\mu\;\nonumber\;(\forall c'_j\in \mathbb{R})\\&\Rightarrow\;p=\left(\mathbf{1}+\sum_{j=0}^{\infty}c'_j\mu^{\ast j}\right)^{-1}\ast \mu.
\end{align}

The inverse again can be expressed as an infinite series of $\mu$. We can truncate the infinite series to some degree. For the approximation, we assume that given $w,s\in \mathbb{N}$, the truncation leaves $a(w,s)$ number of terms. We can show that $a(w,s)\le  \lfloor\frac{w+1}{2}\rfloor+(\lfloor\frac{w-1}{2}\rfloor+1)(w+s-1)\le (w^2+ws+w)$ and then, 
\begin{align}\label{eq:after_inversion}
    p&=\sum_{j=1}^{a(w,s)}c^{(w,s)}_j \mu^{\ast j}+\mathcal{O}\left(\left(\frac{3w\delta}{2}\right)^{w+s}+\left(\delta^w\right)^w\right)\nonumber\\&(\forall c^{(w,s)}_j \in \mathbb{R}),
\end{align} 
given that $\delta<\frac{1}{3w}$, which is a sufficient condition to make the above infinite series well-defined. Such a noise rate restriction is a rough bound. Indeed, we found that $\delta<\frac{1}{4}$ for $(w,s)=(2,0)$, and $\delta<\frac{1}{6}$ for $(w,s)=(3,0)$. Even after the truncation, such an approximate solution becomes exact when $\delta\rightarrow 0$. This means that the sum of the coefficients $c^{(w,s)}_j$'s of the approximate solution is $1$.

For example, when $s=0$, $w=2,3$ yields low-degree approximations, 
\begin{align}\label{eq:approx_solution_proto}
    p&=\frac{3}{2}\mu^{\ast 0}-\frac{1}{2}\mu^{\ast 1}+\mathcal{O}(\delta^2)\nonumber\\&=\frac{111}{64}\mu^{\ast 0}-\frac{53}{64}\mu^{\ast 1}-\frac{3}{64}\mu^{\ast 2}+\frac{9}{64}\mu^{\ast 3}+\mathcal{O}(\delta^3).
\end{align}
Moreover, when $s=1$,
\begin{align}\label{eq:approx_solution_proto2}
     p&=\frac{7}{4}\mu^{\ast 0}-\mu^{\ast 1}+\frac{1}{4}\mu^{\ast 2}+\mathcal{O}(\delta^2)\nonumber\\&\simeq 2.05\mu^{\ast 0}-1.672\mu^{\ast 1}+0.586\mu^{\ast 2}+0.141\mu^{\ast 3}\nonumber\\&-0.105\mu^{\ast 4}+\mathcal{O}(\delta^3).
\end{align}

We see that the approximate solution is a linear combination of convolution powers of $\mu$. It means that we need to estimate such convolution powers correctly so that the linear combination of such estimations has a desired small error. Such estimations is possible since we can sample $\bfu\in \bfZ^n_2$ following the $\mu^{\ast j} \;(j\in \mathbb{N})$. This is done by setting the sample $\bfu$ as $\bfa_1+\bfa_2+\cdots+\bfa_j+\bfa_{j+1}$, where these $\bfa's$ are sampled independently from $\mu$ and also used for estimating other $\mu^{\ast j'(<j)}$'s. Given $N$ number of copies from $\mu,\mu^{\ast 1},\mu^{\ast2},\ldots$ respectively, we can obtain empirical distribution $\mu',\mu'^{\ast 1},\mu'^{\ast2},\ldots$ so that the final approximate solution $p'_{\rm approx}$ is,
\begin{align}\label{eq:method_approx_sol}
    p'_{\rm approx}=\sum_{j=1}^{a(w,s)}c^{(w,s)}_j \mu'^{\ast j},
\end{align}
where each $c^{(w,s)}_j\in \mathbb{R}$ is found from Sec.~\ref{method:approx_solution}. We recall that the coefficients $c^{(w,s)}_j$'s sum to $1$. Since each $\mu^{'\ast j}$ is again probability distribution, $p'_{\rm approx}$ is generally quasi-probabilistic: it sums to $1$, but allows some negative elements.

\begin{figure}[t]
    
    \centering

    \includegraphics[width=\columnwidth]{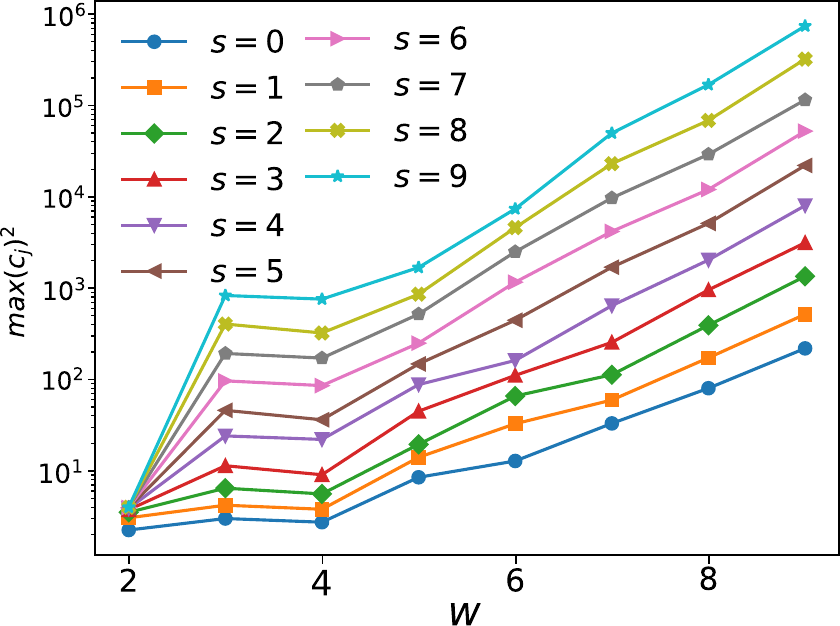}

    \caption{Scaling of the $\max\left\{|c^{(w,s)}_j|\right\}^2$ in Eq.~\eqref{eq:method_approx_sol} as we increase the degree of approximation $(w,s)$. }
    
\label{fig:c_j_degree_approx}
    
\end{figure}
\subsection{Efficient removal of negativity}

In previous subsections, we confirmed that the solutions in WHT and CPM are quasi-probability distributions,
which sum to $1$ but allow negativity. For notational brevity, we shall denote the solution $p'_{\rm approx}=p'$ only for this subsection. After we obtain such a solution $p'$, we need to transform it into a genuine probability solution, say $p^{(+)}$. There is a well-known method to project the quasi-probability to the probability simplex while even reducing the $l_2$-distance (i.e., $\|p-p^{(+)}\|_2\le \|p-p'\|_2$) to the target distribution~\cite{wang2013,flammia2020,borwein2006}. We can also check that if the support size of $p'$ is denoted as $|A|$, then $p^{(+)}$ can be found in $\mathcal{O}(|A|\log(|A|)$~\cite{wang2013} time and memory (see App.~\ref{app:projecting_quasi_simplex_l2} for the detail). Moreover, a simpler method resolves the $l_1$ case~\cite{villani2009}, where the precise statement is given below:
\begin{proposition}
    Assume there is a quasi-probability $p'$ after the estimation of target distribution $p$ with the support size $|A|$ ($p'_{\bfa}\ne 0$ only when $\bfa\in A$). Then we obtain a transformation $p'\rightarrow p^{(+)}$ such that $p^{(+)}$ is a non-negative distribution and 
    \begin{align}
        \|p-p^{(+)}\|_1\le \|p-p'\|_1.
    \end{align}
    Such a transformation takes $\mathcal{O}(|A|)$-time and memory. 
\end{proposition}

\begin{proof}
    We denote $S=\sum_{\bfa\in \bfZ_2^n;p'_{\bfa}<0}|p'_\bfa|$. Then $\sum_{\bfa\in \bfZ_2^n;p'_{\bfa}\ge0}p'_{\bfa}=1+S$. We transform $p'$ into $p^{(+)}$ as follows: $p^{(+)}=\frac{p'_\bfa}{1+S}$ if $p'_{\bfa}\ge 0$, and $p^{(+)}=0$ otherwise. Then $p^{(+)}$ becomes a non-negative distribution. Moreover, 
    \begin{align}
        \|p-p^{(+)}\|_1&=\sum_{p'_{\bfa}\ge 0}|p_\bfa-\frac{1}{1+S}p'_\bfa|+\sum_{p'_{\bfa}<0}p_\bfa\nonumber\\&\le \sum_{p'_{\bfa}\ge 0}|p_\bfa-p'_\bfa|+\sum_{p'_\bfa<0}p_{\bfa}+\frac{S}{1+S}\sum_{p'_\bfa\ge0}p'_\bfa\nonumber\\&\le \sum_{p'_{\bfa}\ge 0}|p_\bfa-p'_\bfa|+\sum_{p'_\bfa<0}(p_\bfa-p'_\bfa)\nonumber\\&=\sum_{\bfa\in \bfZ_2^n}|p_\bfa-p'_\bfa|=\|p-p'\|_1.
   \end{align}
   Corresponding time and memory complexity of the transformation is definite. 
\end{proof}
Therefore, from now on, let us focus on obtaining $p'$ within $l_1$ or $l_2$ distance. This is because transforming $p'$ to $p^{(+)}$ ensures that the distance from the target distribution $p$ does not increase.

\section{Complexity of CPM}

Now, we analyze the sampling and time complexity of getting the approximate solution of CPM. We recall from Eq.~\eqref{eq:method_approx_sol} that the approximate solution is a linear combination with the multiple convolutions of $\mu$. The next proposition indicates the required sampling copy number to achieve the desired estimation accuracy of each convolution powers of $\mu$: 

\begin{proposition}\label{main:prop_sparsity_general}
    Given a distribution $\mu$, suppose we have a sampler following the distribution $\mu^{\ast j}\; (j\in \mathbb{N})$ and $\epsilon \ll\frac{1}{j+1}$, $\epsilon'>0$. Also, we assume that there exists a subset $A$ such that $\mathbf{0}\in A$ and $\sum_{\bfa\in A}\mu_{\bfa}\ge 1-\epsilon$.\\
    (i) Then $\mu^{\ast j}$ satisfies $\sum_{\bfa\in A'}\mu^{\ast j
    }\ge 1-(j+1)\epsilon$, with a subset $A'$ of size $\mathcal{O}(|A|^{j+1})$.\\
    (ii) Let $(\mu^{\ast j})'$ be the empirical distribution after the $N$-number of sampling from the $\mu^{\ast j}$. \\Then if $N\ge \mathcal{O}(|A|^{j+1}\epsilon^{'-2}\log(\delta_f^{-1}))$, then $\|\mu^{\ast j}-(\mu^{\ast j})'\|_1\le 3\epsilon'+(2j+2)\epsilon+\mathcal{O}(\epsilon'^2)$.\\
    (iii) If $N\ge \mathcal{O}(\epsilon^{-2}\log(\delta_f^{-1}))$, then $\|\mu^{\ast  j}-(\mu^{\ast  j})'\|_2\le \mathcal{O}(\epsilon)$.
    
\end{proposition}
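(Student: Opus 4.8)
The plan is to treat part (i) as the only genuinely new ingredient and to obtain parts (ii) and (iii) by feeding the support bound from (i) into the same empirical-distribution convergence machinery already invoked for Proposition~\ref{main:prop_sparsity}. The whole argument rests on a single structural fact: a draw from $\mu^{\ast j}$ is, by the paper's own sampling recipe, the XOR-sum $\bfb_1+\cdots+\bfb_{j+1}$ of $j+1$ independent draws $\bfb_i\sim\mu$, so I only need to understand how the concentration set $A$ of $\mu$ propagates through an iterated sumset over $\bfZ^n_2$.

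For part (i) I would define the $(j+1)$-fold sumset $A'\equiv\{\bfa_1+\cdots+\bfa_{j+1}:\bfa_i\in A\}\subseteq\bfZ^n_2$, which immediately satisfies $|A'|\le|A|^{j+1}=\mathcal{O}(|A|^{j+1})$ and contains $A$ (take all but one summand equal to $\mathbf{0}\in A$, which also shows the supports nest as $j$ grows). The key observation is that the sum $\bfb_1+\cdots+\bfb_{j+1}$ can land outside $A'$ only if at least one $\bfb_i\notin A$, since all summands in $A$ force the sum into $A'$ by construction. A union bound over the $j+1$ draws then yields $\Pr[\bfb_1+\cdots+\bfb_{j+1}\notin A']\le\sum_{i=1}^{j+1}\Pr[\bfb_i\notin A]\le(j+1)\epsilon$, i.e. $\sum_{\bfu\in A'}\mu^{\ast j}_{\bfu}\ge 1-(j+1)\epsilon$. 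This is the clean probabilistic generalization of the explicit double-sum manipulation carried out in Proposition~\ref{main:prop_sparsity}(i).

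For parts (ii) and (iii) I would invoke the standard convergence rates of empirical distributions~\cite{devroye1983,berend2012,devroye2013}, fed with the effective support size $|A'|=\mathcal{O}(|A|^{j+1})$ from (i). The one complication is that $\mu^{\ast j}$ is not exactly supported on $A'$; it carries a leakage mass $\gamma\le(j+1)\epsilon$ on the exponentially large complement. To handle this I would coarse-grain $(A')^c$ into a single garbage symbol, producing a distribution on $|A'|+1$ letters, apply the finite-alphabet $l_1$ bound (expected error $\mathcal{O}(\sqrt{|A'|/N})$ plus a McDiarmid-type deviation costing $\mathcal{O}(\sqrt{\log(\delta_f^{-1})/N})$, using that a single-sample change moves $\|\cdot\|_1$ by at most $2/N$), and then verify that un-coarsening inflates the $l_1$ error by at most $2\gamma=\mathcal{O}(j\epsilon)$. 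Choosing $N\ge\mathcal{O}(|A|^{j+1}\epsilon^{-2}\log(\delta_f^{-1}))$ makes the finite-alphabet contribution $\mathcal{O}(\epsilon)$ and delivers $\|\mu^{\ast j}-(\mu^{\ast j})'\|_1\le\mathcal{O}(j\epsilon)$. For (iii) no coarse-graining is needed, since $\mathbb{E}\|\mu^{\ast j}-(\mu^{\ast j})'\|_2^2=\frac{1}{N}\sum_{\bfu}\mu^{\ast j}_{\bfu}(1-\mu^{\ast j}_{\bfu})\le\frac{1}{N}$ is dimension-free; a concentration step then shows $N\ge\mathcal{O}(\epsilon^{-2}\log(\delta_f^{-1}))$ suffices independently of $|A|$.

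The main obstacle I anticipate is precisely the leakage bookkeeping in (ii): the cited empirical-convergence results are stated for distributions on a fixed finite alphabet, whereas $\mu^{\ast j}$ spreads a small but nonzero mass over all of $\bfZ^n_2$. Making the garbage-symbol reduction rigorous—confirming that it introduces no hidden dependence on $2^n$ and loses at most $\mathcal{O}(j\epsilon)$ in $l_1$ upon un-coarsening—is where the care is required; by contrast, the sumset/union-bound core of (i) and the dimension-free $l_2$ estimate of (iii) are comparatively routine.
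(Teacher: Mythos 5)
Your proposal is correct and follows essentially the same route as the paper: part (i) via the $(j+1)$-fold sumset (your union bound is the probabilistic rephrasing of the paper's inductive sum manipulation), part (ii) by restricting to the dominant sumset support, applying the finite-alphabet empirical $l_1$ convergence there, and separately bounding the $\mathcal{O}(j\epsilon)$ leakage mass (the paper does exactly this bookkeeping via the conditional distribution $\mu^*$ on $(w+1)A$ and the count $N_{(w+1)A}$), and part (iii) by a dimension-free second-moment bound plus concentration (the paper uses a Hilbert-space Hoeffding inequality of Pinelis/Yurinskii, which also covers its quasi-probabilistic PEC generalization). No gaps.
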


The proof can be seen in App.~\ref{method:sampling_complexity}. 
Let $\eta(w,s)\equiv \max_{j}\left\{(c_j)^2\right\}$, and suppose $\sum_{\bfa\in A}p_{\bfa}\ge1-\frac{\epsilon}{4a(w,s)\sum_{j=1}^{a(w,s)}|c_j^{(w,s)}|}$ for some subset $A\subset \bfZ^n_2$ ($\Gamma(w,s)\equiv4a(w,s)\sum_{j=1}^{a(w,s)}|c_j^{(w,s)}|$ where $a(w,s)\le  \lfloor\frac{w+1}{2}\rfloor+(\lfloor\frac{w-1}{2}\rfloor+1)(w+s-1)$). We follow the above approximation, Eq.~\eqref{eq:method_approx_sol}. Then by Prop.~\ref{main:prop_sparsity_general}, with a failure probability $\delta_f$, $\mathcal{O}(a^3(w,s)\log(a(w,s)\delta_f^{-1})|A|^{2a(w,s)}\eta(w,s)\epsilon^{-2})$ number of samplings are enough to achieve
\begin{align}
    &\|p-p_{{\rm approx}}\|_1\le\nonumber\\&\sum_{j=0}^{a(w,s)-1}|c^{(w,s)}_j|\|\mu^{\ast j}-\mu'^{\ast j}\|_1+\mathcal{O}\left(\left(\frac{3w\delta}{2}\right)^{w+s}+\delta^w\right)\nonumber\\&\le\epsilon+\mathcal{O}\left(\left(\frac{3w\delta}{2}\right)^{w+s}+\delta^w\right).
\end{align}
The $a(w,s)^3\eta(w,s)$-term of sampling complexity is needed because we need an estimation accuracy $\epsilon'=\frac{\epsilon}{2}\left(a(w,s)\max_{j}(|c_j|)\right)^{-1}$ for each summation term, and $\mathcal{O}(a(w,s))$-number of input state is needed for single sample from $\mu,\mu^{\ast{1}},\ldots, \mu^{\ast a(w,s)-1}$. Moreover, Prop.~\ref{main:prop_sparsity_general} (iii) implies that $\mathcal{O}\left(a^3(w,s)\eta(w,s)\epsilon^{-2}\log(a(w,s)\delta_f^{-1})\right)$, regarding $|A|=1$, is enough to obtain  $p_{\rm approx}$ with a similar bias with $l_2$ norm.

From the above argument and Prop.~\ref{main:prop_sparsity_general}, the next theorem encapsulates the required sampling and time complexities to obtain $p_{\rm approx}$ within a desired error $\epsilon$. 
%\sun{Could we have an informal version of the theorem in the main text and move this version (which contains many intermediate definitions and details) to Methods?
%The theorem in the main text could be very informal, say your algorithm works in poly time -- this very high level, which is just to show the essence of the result. You may leave some details here, but try to avoid any intermediate definitions, for example, remove $a$ and $\eta$.
%}

\begin{figure}[t]
 \includegraphics[width=8cm]{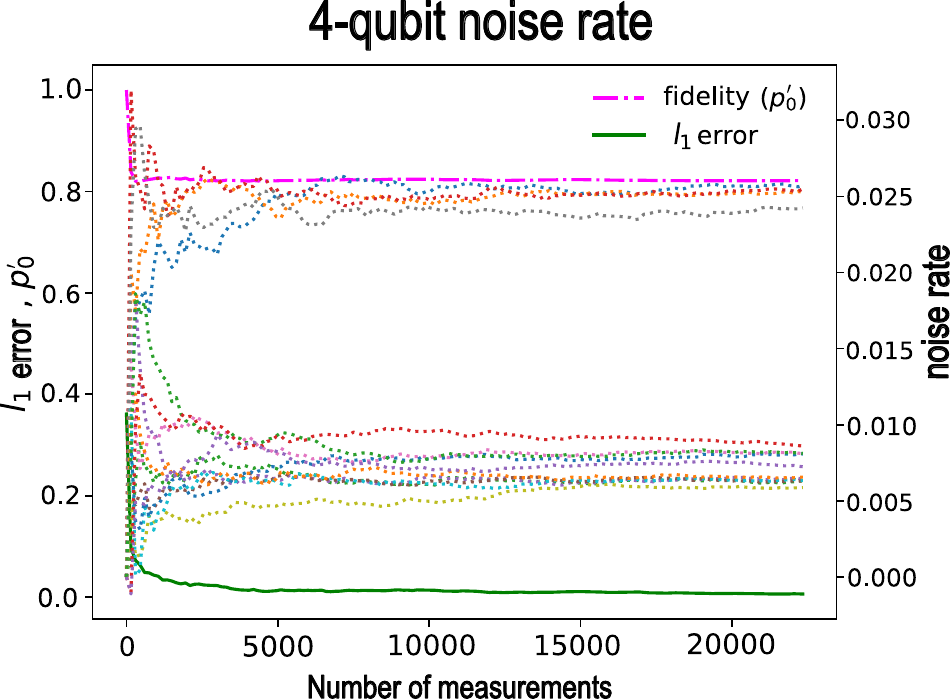}

\caption{Estimation accuracy of $4$-qubit noise learning of third-order complete hypergraph state $\ket{K_4}$. We arbitrarily chose the noisy input state with $p_0\simeq0.819$. For each sampling step, we reconstruct $p'$ via Eq.~\eqref{eq:main_exactsolution_fwht}. The green solid line is the $l_1$-distance between the given step's $p'$ and $p$. The dashed and dash-dotted lines correspond to $16$-components of $p'$.
}\label{fig:4_qubit_numerical_random_noise}
\end{figure}
\begin{theorem}[Sampling and time complexity of noise estimation]\label{thm:main_efficiency}
    
    Fix 
    the approximation order $w,s\leq 100$ and suppose the infidelity $\delta:= 1-\braket{\psi|\rho|\psi}<\frac{1}{3w}$. %We denote $\mathcal{O}_{w,s}$ as the scaling $\mathcal{O}$ with a factor depending on the tuple $(w,s)$. 
    Given $\epsilon>0$ and failure probability $\delta_f>0$, the following holds for the solution of Eq~\eqref{eq:thm_main_2markovian_eq}:\\ 
    (a) We can obtain the true noise distribution $p$ with a bias (with respect to $l_2$-norm) at most $\epsilon+\mathcal{O}\left(\left(\frac{3}{2}w\delta\right)^{w+s}+\delta^w\right)$  with $\mathcal{O}(\poly(\epsilon^{-1},\log\delta^{-1}_f))$ sampling and $\mathcal{O}(\poly(n,\epsilon^{-1},\log\delta^{-1}_f))$ time complexity.\\
    (b) There exists a jointly increasing function $\Gamma(w,s)$ such that given  $\sum_{\bfa\in A}p_{\bfa}\ge 1-\frac{\epsilon}{\Gamma(w,s)}$ for some subset $A\subset \bfZ^n_2$, we can obtain $p$ with a bias $\epsilon+\mathcal{O}\left(\left(\frac{3}{2}w\delta\right)^{w+s}+\delta^w\right)$ in $l_1$-norm with $\mathcal{O}(\poly(\epsilon^{-1},\log\delta^{-1}_f),|A|^{(w^2+ws+w)})$ sampling and $\mathcal{O}(\poly(n,\epsilon^{-1},\log\delta^{-1}_f),|A|^{(w^2+ws+w)})$ time complexity.\\
    In both (a) and (b), the complexity has a scaling factor depending on $(w,s)\in \mathbb{N}^2$.
  
\end{theorem}

Here, $l_1$ ($l_2$)-norm is defined as $\|\bfv\in \mathbb{R}^{2^n}\|_{1\;(2\;{\rm respectivley})} \equiv \sum_{\bfa\in \bfZ^n_2}|v_{\bfa}|$ $\left(\sqrt{\sum_{\bfa\in \bfZ^n_2}v_{\bfa}^2}\right)$, and the bias is regarded as $\|p-p_{\rm approx}\|_{1\;(2)}$ between real solution $p$ and approximation $p_{{\rm approx}}$, respectively. 
%The statement (i) is derived from the direct solving via , and (ii) uses the sampling from the iterative convolutions. 
The bias of CPM depends on the degree of approximation order $(w,s)$. We need such two indices because we truncate the two parts of the infinite series component of the perturbed solution (see Eq.~\eqref{eq:after_inversion}). Theorem~\ref{thm:main_efficiency} states that dephasing noise $p$ of (tailored) hypergraph states can be efficiently approximated with much higher accuracy, in $l_2$-norm, compared to the case where we learn the Pauli noise itself within the infinite norm~\cite{harper2021}. The scaling factor in the complexity for CPM depends on $(w,s)$. Figure~\ref{fig:c_j_degree_approx} shows that as we increase the degree of approximation, the square of the maximal value of $c_j$ ($=\eta$) increases super-exponentially by the $j$. Nevertheless, we observe that within $w=8,s=5$, which shows sufficient accuracy in almost cases where $\delta\ll0.1$, the coefficient is within thousands.  When the $p$ has a sparse support~\cite {van2023,harper2021} ($|A|=\mathcal{O}(\poly(n))$), even the $l_1$ approximation can be done efficiently. 

\begin{figure}[t]
    \centering
    
    \includegraphics[width=\columnwidth]{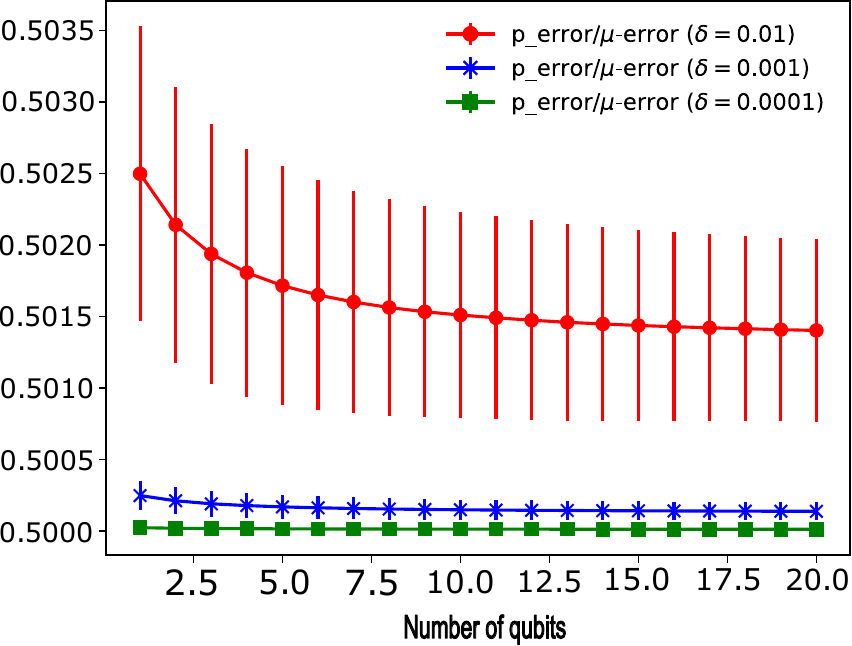}

    \caption{The averaged value of $\frac{\|p^{(1)}-p^{(2)}\|_1}{\|\mu^{(1)}-\mu^{(2)}\|_1}$. We randomly chose $\mu^{(1)}_{\mathbf{0}},\mu^{(2)}_{\mathbf{0}}\ge 1-\delta$, and other noise rates are uniformly chosen on the probability simplex (Dirichlet distribution~\cite{mackay2003}). Then we calculate quasi-probabilities $p^{(1)},p^{(2)}$ via $p=\frac{1}{2^n}H\sqrt{H\mu}$. We averaged the rate over $5000$ samples. The x-axis represents the number of qubits $n$. 
    }\label{fig:numerical_dirichlet_noise}
    \end{figure}

\section{Numerical results}

In this section, we demonstrate numerical simulations showing the accuracy and efficiency of our noise detection algorithm. Figure~\ref{fig:4_qubit_numerical_random_noise} plots the estimated $p'$ of the noisy $4$-qubit complete hypergraph state, using the FWHT method. Given $p$ (noise distribution) and $\mu$ (measurement distribution), our method successfully estimates $p'$ within a small $l_1$-error. Furthermore, in Fig.~\ref{fig:numerical_dirichlet_noise}, we have estimated the rate between the distance between two different measurement distributions $\|\mu^{(1)}-\mu^{(2)}\|_1$ and the distance between two following solutions $\|p^{(1)}- p^{(2)}\|_1$. We can see that when the noise rate $\delta$ is small, the average rate converges to $\simeq 0.5+\delta$ by increasing $n$. Hence we expect that Eq.~\eqref{eq:asymptotic_p_dist} holds even if we set $\epsilon_{\rm th}=\mathcal{O}(1)$. In other words, it implies that even for large $n$, scaling of $l_1$-distance between $p$ and $p'$ follows the one of $\mu$ and $\mu'$. 

Next, let us look at Fig.~\ref{fig:union_jack_approx}. Here we employed more practical quantum noises, depolarizing noise, and $Z$-noise. We chose the target as $18$-qubit Union Jack states~\cite{zhu2019} (Fig.~\ref{fig:union_jack_diagram}), and gave the depolarizing [or Pauli $Z$] noise to CCZ gates. Here, we found the true distribution $p$ via 50000 measurement samples of $\bfa$ following $p_{\bfa}=\braket{\psi_{\bfa}|\rho|\psi_{\bfa}}$. We adopted such an approximate calculation because our algorithm requires a $(18\times 2)=36$-qubit system which takes a lot of time when using a desktop computer. Nonetheless, this approximation is sufficiently accurate since the noise is sparse, i.e. $p_{\bfa}$ of large Hamming weight is exponentially small. 
%\sun{Here, we find the true distribution $p$ via 50000 measurement samples following $p_{\bfa}=\braket{\psi_{\bfa}|\rho|\psi_{\bfa}}$. Because our algorithm requires a $(18\times 2)=36$-qubit system which takes a lot of time when using the desktop computer. This is albeit accurate when the noise is sparse, i.e. $p_{\bfa}$ of large Hamming weight is exponentially small. 
%These sentences do not make sense.}
Again, the $2$nd, $3$rd-degree approximation methods successfully estimate the noise distribution $p$ within a small $l_2$ (or $l_1$) error much lower than $\delta$. As implied by the statement of Thm.~\ref{thm:main_efficiency} (a), $l_2$ error in both cases decreases under $10^{-2}$ within $10000$ copies, regardless of the noise sparsity.  
$3$rd-degree approximation performs better than the $2$nd one. %Since depolarizing noise and X-noise cases have a higher frequency of largely weighted measurement outcomes, the $3$rd-degree approximation does not differ much from the $2$nd-degree approximation. 
Notably, for the $l_1$-error, we observe that the $Z$-noise case exhibits the best performance in terms of sampling complexity. The reason is that $Z$-noise commutes with every \textsc{CCZ} gate and does not propagate to other qubits. It means that the frequency of largely-weighted measurement flip-error is much lower, compared to the other cases, allowing a small size of $|A|$ in Thm.~\ref{thm:main_efficiency} (b). 

\section{Discussion}

We have introduced a noise tailoring method and how to efficiently learn the tailored dephasing-noise rate (diagonals) for graph and hypergraph states. We designed the quantum algorithm to either exactly (WHT) or approximately (CPM) obtain a tailored dephasing noise rate. Assuming the infidelity $\delta<\frac{1}{3w}$, the $l_2$-approximation is efficient in both time complexity (polynomial in $n$) with a fixed approximation order, and circuit depth (e.g., single-depth \textsc{CNOT} is sufficient for graph state). Indeed, existing works on Pauli noise learning has a limited achievement of polynomial sampling and time complexity with respect to the infinite norm $\|\cdot\|_{\infty}$~\cite{harper2021,flammia2020}, which could be exponentially smaller than the $l_2$ norm in the worst case. Though in our work, we focused on the estimation of tailored noise in graph states rather than the noisy channel, bypassing the obstacles of efficiency (i.e., WHT after Pauli eigenvalue estimation~\cite{chen2023,chen2024}).

Estimating a tailored noise offers richer information compared to the previous state verification or randomized benchmarking~\cite{zhu2019,liu2024,lee2025_benchmarking}, and enables us to apply this gained information to post-error mitigation~\cite{lostaglio2021,suzuki2022}, or to introduce a better magic state preparation platform~\cite{gupta2024}. Consequently, these would lead to improved accuracy in graph-based algorithms such as quantum metrology~\cite{shettell2020}, MBQC~\cite{miller2016,briegel2009}, surface code~\cite{aharonov2025,suzuki2022}, and instantaneous quantum polynomial (IQP) circuit~\cite{rajakumar2024}. Our work implies that quantum computing based on a largely entangled resource state~\cite{takeuchi2019} can circumvent the conventional problem of the inefficiency in noise learning and recovery~\cite{flammia2020,berg2023,harper2021}. We note that sparsity (poly-support of $p$) implies the poly-rank of the noisy state $\rho$ ($\because$ Prop.~\ref{prop:dominant_rank}). Therefore,  the $l_1$-approximation efficiency of the algorithm for sparse $p$ also contributes to the primary goal of compressive sensing~\cite{gross2010}, another research field to identify the unknown states with poly-rank. Although poly-rank does not imply a sparsity of $p$, we believe that our scheme has the potential to be generalized to $l_1$-approximation of the noise in a low-rank hypergraph states.

\begin{figure}[t]
    \centering
    
    \includegraphics[width=\columnwidth]{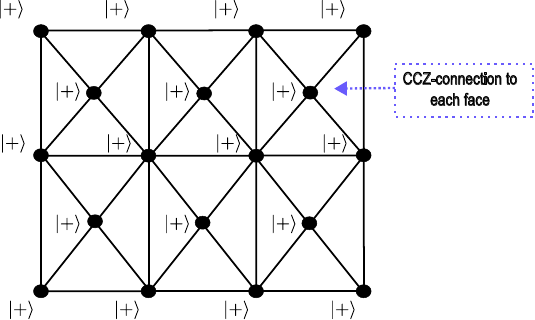}

    \caption{Schematic diagrom of $18$-qubit Union Jack state. Each vertex represents $\ket{+}$-state and \textsc{CCZ}-gate acts to each face. }\label{fig:union_jack_diagram}
    \end{figure}

\begin{figure}[t]
    
    \centering

    \includegraphics[width=\columnwidth]{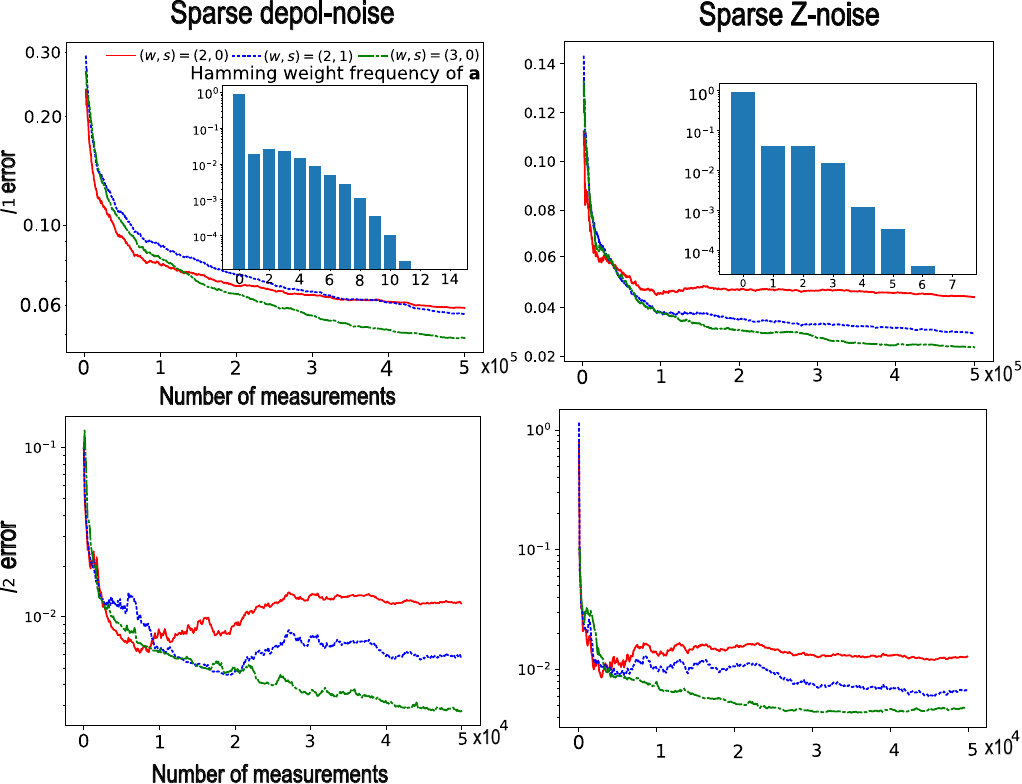}

    \caption{Estimation accuracy of noise learning to noisy (tailored) $18$-qubit Union Jack state. Here, the x-axis indicates the number of measurements to obtain the empirical distribution $(\mu^{\ast0})',(\mu^{\ast1})',(\mu^{\ast2})',(\mu^{\ast3})'$ of true one ($\mu$). For each number of measurements, we used the $(w,s)=(2,0)$ ($(w,s)=(2,1)$, $(w,s)=(3,0)$ resp.)-biased  approximations (Eq.~$(13)$ and Eq.~$(14)$ in the main text) to get the estimated solution $p'$. The number of samples is counted as $\times 4$ ($6,8$) of the number of measurements. We give uniformly random Pauli noise (Z-noise resp.) with a probability $0.005$, hence  $\delta\simeq 0.099$ ($0.096$), on each qubit where the \textsc{CCZ} gate acts. Upper graphs measure $l_1$ distance, and lower graphs measure $l_2$ distance  between $p$ (true solution) and $p'$. Small sub-graphs indicate the frequency of Hamming weight of measurement outcome following the distribution $p_{\bfa}=\braket{\psi_{\bfa}|\rho|\psi_{\bfa}}$. We note that the y-axis of the Hamming weight frequency has a log scale.
}\label{fig:union_jack_approx}
    
\end{figure}
We cannot assure that such an approximation method still holds when the input state is very noisy ($\delta>\frac{1}{3w}$). Hence, further refinement such that it encompasses a wider fidelity region should be an important work, while we leave it as an open problem. A more challenging task is to generalize our scheme to estimate the off-diagonal elements of the input state $\rho$. Non-linearity of the noise equation gives another open problem. To be specific, we could interpret it as the cost of trying to detect the noise of magic states using the resources of a one-step lower Clifford hierarchy~\cite{anderson2024}. In future work, we expect to find an efficient stabilizer circuit scheme for general noises and to completely formalize the trade-off between the nonlinearity of the noise equation, along with the computational complexity, and the hypergraph state hierarchy.

%Further refinement, such that it encompasses a wider fidelity ($\delta>\frac{1}{3w}$) region, should be an important future work. Additionally, we expect to find a resource-efficient circuit scheme for a tight formalization of the trade-off between the computational complexity and the hypergraph state hierarchy.
%\sun{We should note that our result goes beyond previous works on Pauli noise learning [i.e. Nature Physics], which have assumptions of the types of noise, and they scale exponentially in XXX [CHECK if this is correct]. This efficiency is made possible because our training scheme is tailored for  the graph states and a novel approach for noise convolution.}

%\sun{Try to give more applications here?}
% We can find many applications of our noise tailoring and detection algorithms in quantum computing. 
%With a known noise structure, applying error mitigation, such as probabilistic error cancellation~\cite{endo2018,lostaglio2021}, is a straightforward approach, which enables more precise quantum metrology with graph states, achieving Heisenberg scaling of the quantum Fisher information~\cite{shettell2020} and measurement-based computing~\cite{briegel2009}. We can also track the dominant region of noise occurrence in the state preparation platform~\cite{jhuang2024}.
%In Appendix B, we demonstrate how our scheme can be applied to correcting the
%measurement distribution of the instantaneous quantum
%polynomial time (IQP) circuit, which is essential to maintain its classical non-simulability under noise.

\vspace{5pt}
\section{Acknowledgments}

The authors thank Myungshik Kim for the thoughtful discussions throughout the project. This work was supported by the Korean government [Ministry of Science and ICT (MSIT)], the NRF grants funded by the Korea government (MSIT) (No. RS-2024-00413957, No. RS-2024-00438415, and RS-2023-NR076733), the Institute of Information \& Communications Technology Planning \& Evaluation (IITP) grant funded by the Korea government (MSIT) (IITP-2025-RS-2020-II201606 and IITP-2025-RS-2024-00437191), and the Institute of Applied Physics at Seoul National University. This research was also supported by the education and training program of the Quantum Information Research Support Center, funded through the National research foundation of Korea~(NRF) by the Ministry of science and ICT(MSIT) of the Korean government~(No.2021M3H3A1036573). J.S. thanks support from the Innovate UK (Project No.10075020) and Schmidt Sciences, LLC.

\appendix

\section{Clifford hierarchy of multiple controlled-Z gates}~\label{app:CCZ_hierarchy}
This section shows that the $k$th-order multiple controlled-Z gate is of $k$th-order Clifford hierarchy. We first need the following lemma. 

\begin{lemma}\label{lem: P_tensor_hierarchy}
    Suppose $U$ be an $n$-qubit unitary of $k$th-order Clifford hierarchy and $P$ be a single-qubit Pauli operator. Then the following holds:
    (i) The $n+1$-qubit unitary $P\otimes U$ is of $k$th-order Clifford hierarchy.\\
    (ii) If $k>1$, $UP,PU\in {\rm Cl}^{(k)}_n$.
\end{lemma}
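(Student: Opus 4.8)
The plan is to prove both parts by induction on the hierarchy level $k$, reducing everything to the defining conjugation property ${\rm Cl}_n^{(k)}=\{U:U\mathcal{P}_nU^\dag\subseteq{\rm Cl}_n^{(k-1)}\}$ together with two elementary facts: conjugating a Pauli by a Pauli again yields a Pauli, and tensoring a single-qubit Pauli onto a Pauli yields a Pauli. Global phases $\pm1,\pm i$ will play no role throughout, since they cancel under conjugation and hierarchy membership is detected purely through the conjugation action.

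For part (i) I would induct on $k$. The base case $k=1$ is immediate: if $U\in{\rm Cl}_n^{(1)}=\mathcal{P}_n$ then $P\otimes U$ is a tensor product of Pauli operators, hence lies in $\mathcal{P}_{n+1}={\rm Cl}_{n+1}^{(1)}$. For the inductive step I write an arbitrary $O\in\mathcal{P}_{n+1}$ as $\omega\,(Q\otimes R)$ with $Q\in\mathcal{P}_1$, $R\in\mathcal{P}_n$, and compute $(P\otimes U)\,O\,(P\otimes U)^\dag=\omega\,(PQP^\dag)\otimes(URU^\dag)$. Here $PQP^\dag=\pm Q$ is again a single-qubit Pauli and $URU^\dag\in{\rm Cl}_n^{(k-1)}$ because $U\in{\rm Cl}_n^{(k)}$; applying the induction hypothesis at level $k-1$ to the pair $(PQP^\dag,\,URU^\dag)$ places the tensor product in ${\rm Cl}_{n+1}^{(k-1)}$. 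As this holds for every $O$, we conclude $P\otimes U\in{\rm Cl}_{n+1}^{(k)}$.

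For part (ii) I would first isolate the key auxiliary fact, which does \emph{not} follow from group structure (the hierarchy is not multiplicatively closed for $k\ge 3$): each level is closed under Pauli conjugation, i.e. $PVP^\dag\in{\rm Cl}_n^{(k)}$ whenever $V\in{\rm Cl}_n^{(k)}$ and $P\in\mathcal{P}_n$. This too I prove by induction on $k$; the base case is $P\mathcal{P}_nP^\dag=\mathcal{P}_n$, and the step uses $(PVP^\dag)O(PVP^\dag)^\dag=P\big(V(P^\dag O P)V^\dag\big)P^\dag$ with $P^\dag O P\in\mathcal{P}_n$, so the bracket lies in ${\rm Cl}_n^{(k-1)}$ and the outer Pauli conjugation keeps it there by the hypothesis at level $k-1$. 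Granting this, $UP\in{\rm Cl}_n^{(k)}$ is direct: for any $O\in\mathcal{P}_n$, $(UP)O(UP)^\dag=U(POP^\dag)U^\dag\in{\rm Cl}_n^{(k-1)}$ since $POP^\dag\in\mathcal{P}_n$. For $PU$ I compute $(PU)O(PU)^\dag=P(UOU^\dag)P^\dag$; here $UOU^\dag\in{\rm Cl}_n^{(k-1)}$ and the Pauli-conjugation-closure lemma at level $k-1$ keeps it in ${\rm Cl}_n^{(k-1)}$, giving $PU\in{\rm Cl}_n^{(k)}$. (Here $P$ denotes the $n$-qubit Pauli obtained by padding the single-qubit operator with identities, and the argument in fact works for any $P\in\mathcal{P}_n$.)

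I expect the main obstacle to be exactly this Pauli-conjugation-closure lemma needed for the $PU$ half of part (ii): because ${\rm Cl}_n^{(k)}$ fails to be a group beyond the Clifford level, one cannot simply invoke closure under products, and the nested induction must be arranged so that the $k-1$ instance of the closure statement is already available when treating level $k$. The remaining pieces are routine conjugation bookkeeping, and the only technical caveat worth flagging is that all identities hold up to the harmless global phases $\pm1,\pm i$.
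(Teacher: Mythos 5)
Your proposal is correct and follows essentially the same route as the paper: both parts are proved by induction on the hierarchy level $k$, with (i) reduced to the tensor decomposition $(P\otimes U)O(P\otimes U)^{\dagger}=(PQP^{\dagger})\otimes(URU^{\dagger})$ and (ii) reduced to Pauli-conjugation closure of ${\rm Cl}_n^{(k-1)}$. The only difference is organizational: you isolate that closure fact as a standalone auxiliary induction, whereas the paper obtains it by invoking statement (ii) itself at level $k-1$ twice (first on $PU'$, then on $(PU')P$) — the underlying mechanism is identical.
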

\begin{proof}
    (i) We leave the readers for the proof of $k\le 2$. We assume it holds for $k=k'>2$. Then for arbitrary $P'\in \mathcal{P}_{n+1}$ and $U\in {\rm Cl}_n^{(k'+1)}$, $(P\otimes U)P'(P\otimes U)^{\dagger}=Q\otimes UQ'U^{\dagger}$ for some $Q\in \mathcal{P}_1$ and $Q'\in \mathcal{P}_{n}$.  Note that $U Q'U^{\dag}\in {\rm Cl}_{n}^{(k')}$ and hence by the assumption $Q\otimes UQ'U^{\dag}\in {\rm Cl}_{n}^{(k')}$. The proof is completed by induction.\\
    (ii) Similarly, suppose the statement holds for $(k-1)$th-order ($k>2$, proving for $k=2$ case is trivial). Then for an arbitrary Pauli operator $Q\in \mathcal{P}_n$, $PUQU^{\dagger}P^{\dagger}=PU'P$ for some unitary $U'\in {\rm Cl}^{k-1}_n$. Then by the hypothesis, $PU'\in {\rm Cl}^{(k-1)}_n$ and $(PU')P\in {\rm Cl}^{(k-1)}_n$. Therefore, $PU\in {\rm Cl}^{(k)}_n$. Similar manner applies to show for $UP$.  
\end{proof}

Now, we solve the main problem:
\begin{proposition}\label{thm:hypergraph_hierarchy}
    Given a $k$th-degree hypergraph $G(V,E)$, $\prod_{A\in E}C_{A}Z$ is a unitary of $k$-th ordered Clifford hierarchy.
\end{proposition}

\begin{proof}
    We already note that if $k\le 2$, $\Pi_{A\in E}C_{A}Z$ must be in the $k$-th order Clifford hierarchy. Now, we assume it also holds when the maximal order of $E$ is $k'>2$. We will prove that it leads to a similar statement when the order is $k'+1$. We first show that $C_{A}Z$ with $|A|=k'+1$ is of $(k'+1)$th-order Clifford hierarchy. Given an element $a'\in A$, $C_{A\backslash\left\{a'\right\}}Z$ is of $k'$-th Clifford hierarchy. Without losing the generality, we assume $a'=1$. Note that $C_{A}Z$ commutes with the Pauli $Z$ operator. Furthermore, for $\bfx\in \bfZ^n_2$ (note that $C_{A}Z$ is Hermitian),
    \begin{align}
    &C_{A}Z (X\otimes I\otimes \ldots\otimes I)C_{A}Z\ket{\bfx}\nonumber\\&=C_{A}Z (-1)^{\prod_{a\in A}x_{a}}\ket{x_{1}+1,x_2,\ldots,x_n}
       \nonumber\\&
        =(-1)^{\Pi_{a\in A}x_{a}+\Pi_{a\in A}(x_{a}+\delta_{a,1})}\ket{x_{1}+1,x_2,\ldots,x_n}  
        \nonumber\\&=(-1)^{\delta_{A}(1)\prod_{a\in A\backslash\left\{1\right\}}x_{a}}\ket{x_{1}+1,x_2,\ldots,x_n}
        \nonumber\\&
        =\left\{X\otimes \left(\delta_{A}(1)C_{A\backslash\left\{a'\right\}}Z+(1-\delta_{A}(1))I^{\otimes n-1}\right)\right\}\ket{\bfx}\label{eq:hyper_cz_twirling},
    \end{align}
    where $\delta_A(i)=1\; (i\in \mathbb{N})$ if $i\in A$ and zero otherwise.
    Therefore, we conclude that $C_{A}Z (X\otimes I\otimes \ldots\otimes I)C_{A}Z=(X\otimes C_{A\backslash\left\{a'\right\}}Z)\in {\rm Cl}_{n}^{(k-1)}$ ($\because$ Lem.~\ref{lem: P_tensor_hierarchy} (i), $\mathcal{P}_n\subset{\rm Cl}_n^{(k-1)}$~\cite{anderson2024}). From these two facts and that arbitrary Pauli operator can be decomposed as a product of single-qubit $\pm iZ,\pm iX$ operators, we conclude that $C_{A}Z P C_{A}Z$ is product of multiple controlled  Z gates of order $k'$ and some Pauli operator, which is contained in 
    ${\rm Cl}_{n}^{k-1}$ ($\because$ Lem.~\ref{lem: P_tensor_hierarchy} (ii))· 
    
    The $C_{A\backslash \left\{a\right\}}Z$ commutes with all diagonal gates. Even if we twirl another $C_{A'}Z\;(A'\in E)$ gates, the non-trivial effect occurs only on the Pauli operators. Then we can use the same logic and conclude that $\Pi_{A\in E}C_{A}Z$ is also $k$th-order hierarchy given that the maximal size of $A$ is $k$.  

    Moreover, each Pauli operator generated by each $C_{A}Z$ can be propagated forward to other $C_{A}Z$ gates, leaving another Pauli operator and multiple controlled gates with the order lower by one. However, these controlled-Z gates also commute with $C_{A}Z$ gates. Therefore, a depth larger than one is formed by only multiple controlled-Z gates which are generated during the propagation. The following result is that the maximal circuit depth for the noise tailoring of noisy third-order hypergraph states is equal to the maximal depth of \textsc{CZ}-gates, $n+1$.

\end{proof} 

\section{Gate merging}\label{app:gate_merging}

We observe that our circuit has three sections providing multiple controlled-$Z$ operations, tailoring of two copies, and the feedback operation. However, we can merge three non-local parts into one. The non-local part of the first copy tailoring is easy because all controlled $Z$ gates can commute with control qubits of transversal \textsc{CNOT} gates, merging into the feedback section. 

The second tailoring part is a bit trickier. We set $\ket{\bfx,\bfy}\;(\bfx,\bfu'\in \bfZ^{n}_2)$ as computational basis of two-copies. We denote $\textsc{CNOT}_{||}$ as the operator of transversal \textsc{CNOT} gates and $\textsc{CZ}^{(\bfa,t)}$ as controlled gates for tailoring operation $X_\psi^{\bfa}$, where the phase polynomial is $T(\bfx)\;[\textsc{CZ}^{(\bfa,t)}\ket{\bfx}=(-1)^{T(\bfx)}\ket{\bfx}]$ to the second copy. Then we obtain that 
\begin{align}
    &\textsc{CNOT}_{||}\textsc{CZ}^{(\bfa,t)} \textsc{CNOT}_{||}\ket{\bfx,\bfu'}\nonumber\\&=\textsc{CNOT}_{||}\textsc{CZ}^{(\bfa,t)} \ket{\bfx,\bfx+\bfu'}\nonumber\\&=(-1)^{T(\bfx+\bfu')}\ket{\bfx,\bfu'}\nonumber\\&=(-1)^{T(\bfx)+T(\bfu')+\left\{T(\bfx+\bfu')+T(\bfx)+T(\bfu')\right\}}\ket{\bfx,\bfu'}.
\end{align}
Therefore, when the second tailoring gates commute with the transversal \textsc{CNOT} gates, it outputs the same gates to the first copy, and controlled gates whose phase polynomial is $T(\bfx)+T(\bfu')+T(\bfx+\bfu')$. The remaining $T(\bfu')$ is absorbed in the last computational basis measurement of the second copy, and shall not be counted. $T(\bfx)+T(\bfu')+T(\bfx+\bfu')$ does not have any monomials solely containing $\bfx$ or $\bfu'$ variables. Without losing the generality, we only consider the multiple controlled-Z gate $C_AZ$ for some subset $A$ which gives $C_AZ(\ket{\bfx,\bfu'})=(-1)^{x_1x_2,\ldots x_vu_1u_2,\ldots u_w}\ket{\bfx,\bfu'}\;(v,w\in [n])$. The point is that after we enact such a gate, we measure the second copy. Specifically, the effective operation on an arbitrary state $\sigma$ is expressed by 
\begin{widetext}
\begin{align}
    &\sigma\leftarrow \sum_{\bfu',\bfk_1,\bfk_2,\bfl_1,\bfl_2\in \bfZ^n_2}\sigma_{\bfk,\bfl}\textsc{CZ}^{(\bfu')}(I\otimes \ket{\bfu'}\bra{\bfu'})C_AZ\ket{\bfk_1\bfk_2}\bra{\bfl_1\bfl_2}C_AZ(I\otimes \ket{\bfu'}\bra{\bfu'})\textsc{CZ}^{(\bfu')}\nonumber\\&=\sum_{\bfu',\bfk_1,\bfk_2,\bfl_1,\bfl_2\in \bfZ^n_2}\sigma_{\bfk,\bfl}\textsc{CZ}^{(\bfu')}(I\otimes \ket{\bfu'}\bra{\bfu'})(-1)^{\bfk_{11}\bfk_{12},\ldots \bfk_{1v}\bfk_{21}\bfk_{22},\ldots \bfk_{2w}}\ket{\bfk_1\bfk_2}\bra{\bfl_1\bfl_2}(-1)^{\bfl_{11}\bfl_{12},\ldots \bfl_{1v}\bfl_{21}\bfl_{22},\ldots \bfl_{2w}}\nonumber\\&\times(I\otimes \ket{\bfu'}\bra{\bfu'})\textsc{CZ}^{(\bfu')}\nonumber\\&=\sum_{\bfu',\bfk_1,\bfl_1\in \bfZ^n_2}\sigma_{(\bfk_1,\bfu'),(\bfl_1,\bfu')}\textsc{CZ}^{(\bfu')}(-1)^{\bfk_{11}\bfk_{12},\ldots \bfk_{1v}\bfu'_{1}\bfu'_{2},\ldots \bfu'_{yw}}\ket{\bfk_1\bfu'}\bra{\bfl_1\bfu'}(-1)^{\bfl_{11}\bfl_{12},\ldots \bfl_{1v}\bfu'_{1}\bfu'_{2},\ldots \bfu'_{yw}}\textsc{CZ}^{(\bfu')},
\end{align}
\end{widetext}
where $\textsc{CZ}^{(\bfu')}$ is the post-processing multiple controlled-$Z$ operation in the original feedback section, and $\bfk (\bfl\;{\rm resp.})=(\bfk(\bfl)_1,\bfk(\bfl)_2)$. The final expression is equivalent to the additional multiple controlled-$Z$ operation to the first copy, post-processed by the joint information of outcome $\bfu'$. Furthermore, such additional multiple controlled-$Z$ operations reside on lower Clifford hierarhcy, compared to $C_A Z$. In conclusion, all multiple-controlled $Z$ gates over two copies can be merged into the feedback section. 

When $k=3$, we only need \textsc{CZ} operation on the feedback section, and then only $\frac{n}{2}+\mathcal{O}(\log(n))$-depth is required~\cite{maslov2022}. Gate count is at most $\comb{n}{2}=\mathcal{O}(n^2)$, but the average value is lower. 

\begin{proposition}
    The number of two-qubit Clifford gates to run the noise tailoring and learning of third-order hypergraph states is $\mathcal{O}(n^2)$, where the average over measurement copies is at most $\Theta (\frac{n^2}{4})$.
    
\end{proposition}
\begin{proof}
    The maximal number $\mathcal{O}(n^2)$ is definite~\cite{maslov2022}. So, we only prove the average number. Given $\ket{\psi}=\ket{G(V,E)}$, we define the 2-section graph $G'(V,E')$ such that 
    \begin{align}
        E'\equiv \left\{\left\{u,v\right\}|\exists e\in E, \left\{u,v\right\}\subset e\right\}.
    \end{align}
    We note that the 2-section graph $G'$ is a (second-order) graph, where the edges of $G'$ are the opposite edges of each vertex in the original hypergraph edge. For the noise tailoring section, we uniformly choose $\bfa_1,\bfa_2\in \bfZ^n_2$. We also take the second-copy measurement where the outcome $\bfu'$ is sampled by some probability $q$ which depends on $\rho\otimes \rho$. The post-processing operation is also expressed as $U_EX^{\bfu'}U_E^{\dag}$. Operationally, we finally enact the merged operation $U_E X^{\bfa_1+\bfa_2+\bfu'}U_E^{\dag}$. The probability of getting $\bfa_1+\bfa_2+\bfu'=\bfa$ is $\sum_{\bfb}\frac{q(\bfb)}{2^n}=\frac{1}{2^n}$, because the probability of $\bfa_1+\bfa_2=\bfa+\bfu'$ is uniform for all $\bfu'$. Therefore, regardless of $q$, $\bfa$ is sampled uniformly. Now we scrutinize the gate structure.
    \begin{align}
    U_{E}X^{\bfa}U_E^{\dagger}=\prod_{i=1}^{n}U_E(I\otimes I\otimes \cdots\otimes X_{i}^{a_i}\otimes I\otimes\cdots\otimes I)U_{E}^{\dagger}.
    \end{align}
    The above equation tells that for each $i$-th operation where $a_i=1$ is sampled, we enact the $\textsc{CZ}$-gates following the opposite edges of faces that contain $i$-th vertex. Equivalently, each CZ-operation on the edge of $E'$ is determined by the parity of the number of selected opposite vertices after the uniform sampling of $\bfa$. Therefore, the marginal probability of having the $\textsc{CZ}$-gate on the single edge is $\frac{1}{2}$. Finally, if we denote $P(\mathbf{m})\;(\mathbf{m}\in \bfZ^{|E'|}_2)$ as the probability to get the \textsc{CZ}-gates (or not) on the $i$-th edge of $G'$ with $m_i=1$, we conclude that the average number of \textsc{CZ}-gates for the merged feedback is,
    \begin{align}
        \sum_{\mathbf{m}\in \bfZ^{|E'|}_2}\left(\sum_{i=1}^{|E'|}m_i\right)P(\mathbf{m})&=
        \sum_{i=1}^{|E'|}\sum_{\mathbf{m}\in \bfZ^{|E'|}_2,m_i=1}P(\mathbf{m})\nonumber\\&=\sum_{i=1}^{|E'|}\frac{1}{2}=\frac{|E'|}{2}.
    \end{align}
     Given that $|E'|\le\comb{n}{2}\simeq \frac{n^2}{2}$,  the average number of two-qubit gates for the noise convolution is at most $\frac{n^2}{4}$, which hits when $\rho$ is complete hypergraph state. 
\end{proof}

\section{Validity and complexity of WHT}\label{app_time_solving_noise_equation}

We recall that the solution of the convoluted noise equation is expressed exactly as $p=\frac{1}{2^n}H\sqrt{H\mu}$. We note that 
\begin{align}\label{eq:eigenvector_P}
    \sum_{\bfb}p_{\bfa+\bfb}(-1)^{\bfb\cdot\bfc}&=\sum_{\bfb}p_{\bfb}(-1)^{\bfb\cdot \bfc}\times (-1)^{\bfa\cdot \bfc}\nonumber\\&\Rightarrow P(p)\eta^{(\bfc)}=\widehat{p}_{\bfc}\eta^{(\bfc)},
\end{align}
where $\eta^{(\bfc)}_{\bfb}\equiv (-1)^{\bfb\cdot \bfc}$. Therefore, each $\eta^{(\bfc)}$ is the (unnormalized) eigenvector of $P(p)$ with the eigenvalue $\widehat{p}_{\bfc}$.

In the main text, we saw that $\widehat{\mu}$ is always non-negative, and $\widehat{p}$ is also non-negative given that $\delta<\frac{1}{2}$. Considering Eq.~\eqref{eq:eigenvector_P}, we expect that there are many cases where FWHT is well-defined even if $\delta\ge \frac{1}{2}$. For example, we have the following result. 

\begin{proposition}
    Suppose that a graph state $\ket{\psi}$ is generated by layer of $\textsc{CZ}$ gates, $L_i\;(i\in [d])$ for some $d\in \mathbb{N}$. In other words, $\ket{\psi}=\prod_{i=1}^{d}L_i\ket{+}^{\otimes n}$. We also assume that each $L_i$ has the Pauli noise $\mathcal{N}_i$ such that $\mathcal{N}_i(\cdot)=\sum_{\bfa\in \bfZ^{2n}_2}p^{(i)}_\bfa T_{\bfa}(\cdot)T_{\bfa}$.  If all $2^{2n}$ by $2^{2n}$ matrix $\widetilde{P}^{(i)}$, which is defined by $\widetilde{P}^{(i)}_{\bfa,\bfb}=p^{(i)}_{\bfa+\bfb}$, are non-negative, then $P(p)\ge 0$ ($\Rightarrow \widehat{p}\ge 0$).  
\end{proposition}

\begin{proof}
     In this setup, the marginal $Z$-noise of total Pauli noise $\mathcal{N}(\cdot)=\sum_{\bfa\in \bfZ^{2n}_2}\widetilde{p}_{\bfa}T_{\bfa}(\cdot)T_{\bfa}$ is tailored noise. We first prove that if $\widetilde{P}_{\bfa,\bfb}\equiv \widetilde{p}_{\bfa+\bfb}$ and $\widetilde{P}\ge 0$, then $P\ge 0$. This is because,
    \begin{align}
    P_{\bfa,\bfb}=p_{\bfa+\bfb}=\sum_{\bfc\in \bfZ^n_2}\widetilde{P}_{(\bfa+\bfb,\bfc)}&= \frac{1}{2^n}\sum_{\bfc,\bfc'\in \bfZ^n_2}\widetilde{P}_{(\bfa+\bfb,\bfc+\bfc')}\nonumber\\&=\frac{1}{2^n}\sum_{\bfc,\bfc'\in \bfZ^n_2}\braket{\bfa,\bfc|\widetilde{P}|\bfb,\bfc'}.
    \end{align}
    Therefore, for an arbitrary vector $\ket{\tau}=\sum_{\bfa\in \bfZ^n_2}c_{\bfa}\ket{\bfa}$, we obtain that 
    \begin{align}\label{eq:sub_lem_eigenvector_P}
        \braket{\tau|P|\tau}=\sum_{\bfa,\bfb}c^{\ast}_{\bfa}P_{\bfa,\bfb}c_{\bfb}=\frac{1}{2^n}\braket{\tilde{\tau}|\widetilde{P}|\tilde{\tau}}\ge0,
    \end{align}
    where we defined $\ket{\tilde{\psi}}\equiv \sum_{\bfa,\bfc\in \bfZ^n_2}c_{\bfa}\ket{\bfa,\bfc}$, and the last inequality follows by the fact that $\widetilde{P}\ge0$. In conclusion, $P\ge 0$.

    Next, let us assume $d=2$. We propagate the noise $\mathcal{N}_1$ to the left of $\mathcal{N}_{2}$. The resulting form $\mathcal{N}'_{(1)}$ is again the Pauli noise, because the mid-circuit is a Clifford operation. Corresponding $\widetilde{P}^{(1)}$ is transformed to $\sigma P^{(1)}\sigma ^{T}$ by some symmetric matrix $\sigma\in S_{2^{2n}}$, still non-negative. Therefore, $\widetilde{P}=\widetilde{P}^{(2)}\sigma P^{(1)}\sigma^{T}\ge 0$. By the previous claim, $P\ge 0$. Inductively, the proof naturally extends to arbitrary $d$.  
\end{proof}

In addition, if we could estimate the $p_{\mathbf{0}}$ via some fidelity estimation~\cite{flammia2011,huang2020} or verification~\cite{zhu2019,huang2024_sv}, we can guarantee $\widehat{p}\ge 0$ in other way. Since $\sum_{\bfa}\widehat{p}_{\mathbf{a}}=\sum_{\bfa}\sqrt{\widehat{\mu}}_{\bfa}=2^np_{\mathbf{0}}$, we check whether $\sum_{\bfa}\sqrt{\widehat{\mu}}_{\bfa}=2^np_{\mathbf{0}}$ while regarding $\forall \sqrt{\widehat{\mu}}_{\bfa}\ge 0$. If yes, then there is no negative value in $\widehat{p}$. 

Next, we show the required time complexity for WHT. 
\begin{proposition}\label{thm:time_solving_noise_equation}
Suppose $\mu$ has a support $A\subset \bfZ^n_2$ whose generated subspace is known as $V={\rm span
}\left\{\bfv_1,\bfv_2,\ldots,\bfv_{{\rm dim}(V)}\right\}$. The time complexity to calculate
$\frac{1}{2^n}H\sqrt{H\mu}$ is $\mathcal{O}(|V|\log (|V|))$.
\end{proposition}

\begin{proof}
    We look into the transformation. Given $\bfb\in \bfZ^n_2$,

    \begin{align}
        (H\mu)_{\bfb}=\sum_{\bfa\in A}(-1)^{\bfb\cdot \bfa}\mu_{\bfa}.
    \end{align}
    We decompose $\bfb=\bfb_1+\bfb_2$, where $\bfb_1\in V,\;\bfb_2\in V^{\perp}$. Then we note that $(H\mu)_\bfb$ depends only on the $\bfb_1$, not $\bfb_2$. Hence let us say $(H\mu)_\bfb=(H\mu)_{\bfb_1}$. Taking the square [or ($p\in \mathbb{N}$)-th] root to them requires $\mathcal{O}(|V|)$-time. Next, the final value is for $\bfu=\bfu_1+\bfu_2\in \bfZ^n_2$ with the same manner of decomposition,
    \begin{align}
        p_\bfu&=\frac{1}{2^n}\sum_{\bfb_1\in V,\bfb_2\in V^{\perp}}(-1)^{\bfu\cdot (\bfb_1+\bfb_2)}\sqrt{(H\mu)_{\bfb_1}}\nonumber\\&=\frac{1}{2^n}\sum_{\bfb_1\in V}(-1)^{\bfu_1\cdot \bfb_1}\sqrt{(H\mu)_{\bfb_1}}\sum_{\bfb_2\in V^{\perp}}(-1)^{\bfu_2\cdot \bfb_2}\nonumber\\&=\frac{1}{2^{n-\log_2(|V^{\perp}|)}}\delta_{\bfu_2,\mathbf{0}}\sum_{\bfb_1\in V}(-1)^{\bfu_1\cdot \bfb_1}\sqrt{(H\mu)_{\bfb_1}}.
    \end{align}
    Therefore, $p$ also has the support on $V$. Furthermore, when we operate WHT, we can treat $V$ as a subspace spanned by some of the trivial bases as usual. Hence, it takes $\mathcal{O}(|V|\log(|V|))$-time~\cite{alman2023} for WHT.    
\end{proof}
We easily note that $\mathcal{O}(|V|)$-sized memory is required. Moreover, we can use the same logic to solve Eq.~(15) in the main text. The only thing that changes is the degree of the root. 

Since we do a finite number of samples, $\mu'$ is slightly different from $\mu$, and hence is $p'=\frac{1}{2^n}H\sqrt{H\mu'}$ from $p$. We showed the well-approximated behavior of $p'$ following the distance between $\mu$ and $\mu'$. The main statement was presented in Prop.~\ref{main:thm_uniqueness}, and we give a detailed proof here. 

    Suppose $\mu$ has a solution $p$ and $\mu'$ has a solution $p'$.
    We start to prove the statement (ii,iii). We take the upper bound of  $\|p-p'\|_1$ as
    \begin{align}\label{eq:p_collapse}
        \|p-p'\|_1\le \frac{1}{2^n}\|H\|_1\|\sqrt{H\mu}-\sqrt{H\mu'}\|_1.
    \end{align}
    We use the fact that for $a,b\in \mathbb{R}^{2^n}_{\ge 0}$,
    \begin{align}\label{eq:upperbound_distance}
        &\sum_{\bfu}|\sqrt{a_{\bfu}}-\sqrt{b_{\bfu}}|\nonumber\\&=\sum_{\bfu}\frac{|a_{\bfu}-b_{\bfu}|}{\sqrt{a_{\bfu}}+\sqrt{b_{\bfu}}}\nonumber\\&\le \frac{1}{\min_{\bfu\in \bfZ^n_2}\left\{\sqrt{a_{\bfu}}+\sqrt{b_{\bfu}}\right\}}\sum_{\bfu}|a_{\bfu}-b_{\bfu}|
        \nonumber\\&\le \frac{1}{\min_{\bfu\in \bfZ^n_2}\left\{\sqrt{a_{\bfu}}\right\}+\min_{\bfu\in \bfZ^n_2}\left\{\sqrt{b_{\bfu}}\right\}}\sum_{\bfu}|a_{\bfu}-b_{\bfu}|.
    \end{align}
    Then Eq.~\eqref{eq:p_collapse} becomes, 
    \begin{align}\label{eq:l_1_dist_upperbound}
        \|p-p'\|_1&\le \frac{1}{2^{n+1}\min_{\bfa}\left\{(Hp)_{\bfa}\right\}}\|H\|_1\|H(\mu-\mu')\|_1\nonumber\\&\le \frac{1}{2^{n+1}\min_{\bfa}\left\{(Hp)_{\bfa}\right\}}\|H\|^2_1\|\mu-\mu'\|_1\nonumber\\&\le \mathcal{O}(2^n)\|\mu-\mu'\|_1,
    \end{align}
    because $\|H\|_1=2^n$ and $\forall \widehat{p}_{\bfa}\ge 1-2\delta\ge \Omega(1)$ given that $\delta<\frac{1}{2}-\Omega(1)$. We stress that the upper bound is very rough. In many cases where $\mu$ and $\mu'$ are very close, we will see that $\mathcal{O}(2^n)$ reduces to a constant.  

    Before to describe in detail, we go over the upper bound of the $l_2$-norm. It can be found similarly to Eq.~\eqref{eq:upperbound_distance}.
    \begin{align}
        \|p-p'\|_2&\le \frac{1}{2^n}\|H\|_2\|\sqrt{H\mu}-\sqrt{H\mu'}\|_2\nonumber\\&\le \frac{1}{2^{n+1}
        \min_{\bfa}\left\{(Hp)_{\bfa}\right\}}\|H\|_2^2\|\mu-\mu'\|_2\nonumber\\&\le \mathcal{O}(\min_{\bfa}\widehat{\mu}_\bfa^{-1})\|\mu-\mu'\|_2,
    \end{align}
     where we use the fact that $\|H\|_2=\sqrt{2^n}$. We note that given $\delta<\frac{1}{2}$, $\mathcal{O}(\min_{\bfa}\widehat{\mu}_\bfa^{-1})=\mathcal{O}(1)$.  Therefore, accurate estimation of $\mu$ also guarantees the estimation of $p$ with a similar accuracy. 

    Next, we will show that such an upper bound of Eq.~\eqref{eq:l_1_dist_upperbound} can be tightened. We define 
    \begin{align}
        P_{\delta}\equiv \left\{p\in \mathbb{R}^{2^n}_{\ge 0}\bigg|p_{\mathbf{0}}\ge 1-\delta,\;\sum_{\bfa}p_{\bfa}=1\right\}.
    \end{align}

    The inverse function theorem~\cite{taylor2020,guillemin2010} implies that since the mapping $f(p)_{\bfu}\equiv\sum_{\bfa}p_{\bfa}p_{\bfa+\bfu}$ is analytic and bijective between $P_{\delta}$ and $f(P_\delta)$ (note that this is also open in $\mathbb{R}^{2^n}$), there exists smooth inverse $f^{-1}:f(U)\rightarrow U$ where $U\subset \mathbb{R}^{2^n}$ is some neighborhood of $P_{\delta}$ and analyticity on $B_{\epsilon_{\rm th}}(\mu)\subset f(U)$ for some $\epsilon_{\rm th}>0$. 

    While we suppose $\|\mu-\mu'\|_1\le \epsilon<\epsilon_{\rm th}$, the above consideration means that via Taylor expansion,
    \begin{align}\label{eq:inverse_p_exact}
       \|p'-p\|_1&=\|\nabla f^{-1}(\mu) \cdot(\mu'-\mu)+\mathcal{O}(\epsilon^2)\|_1\nonumber\\&\le \|\nabla f^{-1}(\mu)\|_{1}\|\mu'-\mu\|_1+\mathcal{O}(\epsilon^2). 
    \end{align}
     We have a well-known fact 
     \begin{align}
         \nabla f^{-1}(\mu)=(\nabla f(f^{-1}(\mu)))^{-1}=(\nabla f(p))^{-1}.
     \end{align}
      Also, we note that $(\nabla f(p))_{\bfa,\bfb}=2p_{\bfa+\bfb}$, hence $\frac{1}{2}(\nabla f(p))=I-\delta K$, where
    \begin{align}
        K_{\bfa,\bfb}=
        \begin{cases}
            1&(\bfa=\bfb)\\
           -\frac{1}{\delta}p_{\bfa+\bfb}&(\bfa\ne \bfb)
        \end{cases}.
    \end{align}
    Furthermore, for $\bfx\in \bfZ^n_2$ on the unit ball with $l_1$-norm, 
    \begin{align}
        \sum_{\bfa\in \bfZ^n}\left|\sum_{\bfb\in \bfZ^n_2}K_{\bfa,\bfb}x_{\bfb}\right|&\le \sum_{\bfa,\bfb\in \bfZ^n_2}\left|K_{\bfa,\bfb}x_{\bfb}\right|\nonumber\\&=\sum_{\bfb\in \bfZ^n_2}\left(1+\frac{1}{\delta}\sum_{\bfa\ne \bfb}p_{\bfa+\bfb}\right)\left|x_{\bfb}\right|\nonumber\\&\le (1+1)\|\bfx\|_1=2.
    \end{align}
    The last line follows by $\sum_{\bfa\ne \bfb}p_{\bfa+\bfb}=1-p_{\mathbf{0}}=\delta$.
    Therefore, $\|K\|_{1}\le 2$. We conclude that along with the fact that $\|A^k\|_{1}\le \|A\|_{1}^k$ ($k\in \mathbb{N}$) and $\delta<\frac{1}{2}$,
    \begin{align}\label{eq:approx_nabla}
        \|\nabla f^{-1}(\mu)\|_{1}&=\|(\nabla f(p))^{-1}\|_{1}\nonumber\\&=\left\|\frac{1}{2}(I-\delta K)^{-1}\right\|_{1}\nonumber
    \end{align}
    \begin{align}
        &=\frac{1}{2}\|I+\delta K+\mathcal{O}((2\delta)^2)\|_{1}\nonumber\\&\le\frac{1}{2}\left(1+2\delta+\mathcal{O}((2\delta)^2)\right).
    \end{align}
    Here, we use $\|I\|_{1}=1$.
    By Eq.~\eqref{eq:inverse_p_exact}, 
    \begin{align}
        \|p-p'\|_1&\le (\frac{1}{2}+\delta)\|\mu-\mu'\|_1+\mathcal{O}(\|\mu-\mu'\|_1^2).
    \end{align}

     The result means that if we obtain the solution, then the $l_1$ distance is asymptotically twice shorter than the distance between sampled distributions. This is quite counterintuitive because if $p$ and $p'$ undergo the same Markov process, then $l_1$-distance shrinks or is equal. In this context, $p$ and $p'$ undergo self-convolution; effective stochastic processes are slightly different from each other. 

We did not specify the scale of $\epsilon_{\rm th}$. However, the inverse has fractional order since the $f$ is a quadratic function. Therefore, we expect that higher-order terms follow the geometric series. If so, this Taylor series converges for $\epsilon<\mathcal{O}(1)$, as long as $f^{-1}(B_{\epsilon}(\mu))$ has no singular value. In other words, we expect that the applicable region of $\epsilon$ is still within $\mathcal{O}(1)$. For example, let us calculate the second-order term of Eq.~\eqref{eq:inverse_p_exact}.

 We note that for fixed $\bfb\in A$,
    \begin{align}
        0&=\partial_{\bfa,\bfa'}p_{\bfb}=\partial_{\bfa,\bfa'}(f^{-1}\circ f(\bfp))_\bfb\nonumber\\&=\partial_{\bfa}\left(\sum_{\bfc}\partial^{(f)}_{\bfc}f^{-1}_{\bfb}\partial_{\bfa'}f_{\bfc}\right)\nonumber\\&=\sum_{\bfc,\bfc'}\partial_{\bfc',\bfc}^{(f)}f^{-1}_{\bfb}\partial_{\bfa}f_{\bfc'}\partial_{\bfa'}f_{\bfc}+\sum_{\bfc}\partial^{(f)}_{\bfc}f^{-1}_{\bfb}\partial_{\bfa,\bfa'}f_{\bfc},
    \end{align} where we define $\partial^{(f)}$ as the partial derivative with respect to co-domain variable $f$.
    Now, we regard $\partial_{\bfc',\bfc}f^{-1}_{\bfb}$ as $4^n$-sized vector $(H(f_{\bfb}))_{\bfc,\bfc'}\equiv\partial_{\bfc,\bfc'}f^{-1}_{\bfb}$ Notice that $\partial_{\bfa,\bfa'}f_{\bfc}=2\delta_{\bfc,\bfa+\bfa'}$. 
    Then $\sum_{\bfc}\partial_{\bfc}^{(f)}f^{-1}_{\bfb}\partial_{\bfa,\bfa'}f_{\bfc}=2 \partial_{\bfa+\bfa'}^{(f)}f^{-1}_{\bfb}$ is $4^n$-sized vector denoted as $\nabla_2 f_{\bfb}$ with the argument ($\bfa,\bfa'$). Then we obtain a matrix equation, 
\begin{widetext}
    \begin{align}
        (\nabla f\otimes \nabla f)H(f_{\bfb})=-\nabla_2 f_{\bfb}\Rightarrow H(f_{\bfb})=-(\nabla f^{-1}\otimes \nabla f^{-1})\nabla_2 f_{\bfb}.
    \end{align}
    Therefore, using $\nabla^{(f)}f^{-1}=\nabla f^{-1}$, we obtain that
    \begin{align}
        (\partial_{\bfc',\bfc}f^{-1})_{\bfb}=-2\sum_{\bfa,\bfa'}(\nabla f^{-1})_{\bfc',\bfa'}(\nabla f^{-1})_{\bfc,\bfa}(\nabla f^{-1})_{\bfa+\bfa',\bfb}.
    \end{align}
    Now, we calculate the second-order part of Taylor expansion, which is,

    \begin{align}
        \frac{1}{2}\sum_{\bfc,\bfc'}\partial_{\bfc',\bfc}f^{-1}_{\bfb}(\mu_\bfc-\mu'_\bfc)(\mu_{\bfc'}-\mu'_{\bfc'})=-\sum_{\bfa,\bfa'}\left(\sum_{\bfc'}(\nabla f^{-1})_{\bfc',\bfa'}(\mu_{\bfc'}-\mu'_{\bfc'})\right)\cdot\left(\sum_{\bfc}(\nabla f^{-1})_{\bfc,\bfa}(\mu_\bfc-\mu'_\bfc)\right)(\nabla f^{-1})_{\bfa+\bfa',\bfb}.
    \end{align}

    Therefore, $l_1$-norm of the above value is bounded as, 

        \begin{align}
            \frac{1}{2}\sum_{\bfb}\left|\sum_{\bfc,\bfc'}\partial_{\bfc',\bfc}f^{-1}_\bfb(\mu_\bfc-\mu_\bfc)(\mu_\bfc'-\mu_\bfc')\right|&\le\sum_{\bfa,\bfa'}\left|\sum_{\bfc'}(\nabla f^{-1})_{\bfc',\bfa'}(\mu_{\bfc'}-\mu'_{\bfc'})\right|\cdot\left|\sum_{\bfc}(\nabla f^{-1})_{\bfc,\bfa}(\mu_\bfc-\mu'_\bfc)\right|\sum_{\bfb}\left|(\nabla f^{-1})_{\bfa+\bfa',\bfb}\right|\nonumber\\&\le\sum_{\bfa'}\left|\sum_{\bfc'}(\nabla f^{-1})_{\bfc',\bfa'}(\mu_{\bfc'}-\mu'_{\bfc'})\right|\cdot\sum_{\bfa}\left|\sum_{\bfc}(\nabla f^{-1})_{\bfc,\bfa}(\mu_\bfc-\mu'_\bfc)\right|\left(1+\frac{\delta}{2}\right)\nonumber
        \end{align}
\end{widetext}
        \begin{align}
            &=\|(\nabla f^{-1})(\mu-\mu')\|_1^2\left(1+\frac{\delta}{2}\right)\nonumber\\&\le (\frac{1}{2}+\delta)^2 (1+\frac{\delta}{2})\|\mu-\mu'\|_1^2 \le \frac{1}{4}(1+\frac{5}{2}\delta)\|\mu-\mu'\|_1^2.
        \end{align}

    Here, we used the fact that $\sum_{\bfb}\left|(\nabla f^{-1})_{\bfa+\bfa',\bfb}\right|=\frac{1+\delta}{2}+\sum_{\bfb\ne \bfa+\bfa'}\frac{p_{\bfa+\bfa'+\bfb}}{2\delta}+\mathcal{O}(\delta^2)\le 1+\frac{\delta}{2}+\mathcal{O}(\delta^2)$. Hence, the rate between the first-order factor and second-order factor is $\frac{2+5\delta}{4+8\delta}<1$.

In Fig.~\ref{fig:numerical_dirichlet_noise}, we have estimated the rate between the distance between two different measurement distributions $\|\mu^{(1)}-\mu^{(2)}\|_1$ and the distance between two following solutions $\|p^{(1)}- p^{(2)}\|_1$. We can see that when the noise rate $\delta$ is small, the average rate converges to $\simeq 0.5+\delta$ by increasing $n$. Hence we expect that Eq.~\eqref{eq:asymptotic_p_dist} holds even if we set $\epsilon_{\rm th}=\mathcal{O}(1)$.

\section{Convolution power method (CPM)}\label{method:approx_solution}

In the previous part, we observed that the computational complexity of finding the $p'$ takes $\mathcal{O}(n2^n)$-time, which is inefficient for large $n$. Ref.~\cite{scheibler2015,li2015,harper2021} suggested another FWHT routine, which employs so-called sub-sampling and peeling decoder, specialized to the sparse support. However, this method assumes randomly chosen support over $\bfZ^n_2$, proving the high success rate. This may not be suitable for our cases where the given signal is clustered over the $\mathbf{0}$~\cite{harper2021}. Moreover, as we see from $p=\frac{1}{2^n}H\sqrt{H\mu}$, we need to operate the Walsh-Hadamard transform twice. It means that we have non-sparse input for the second transform. To overcome such problems, we introduce another method to approximately solve the convoluted noise equation, the so-called convolution power method (CPM). 

First, we denote $\bm{\delta}_p\equiv \mathbf{1}-p$. 

Here, $\|\bm{\delta}_p\|_2\le\|\bm{\delta}_p\|_1=2(1-p_{\mathbf{0}})=2\delta$. Similarly, we can obtain that $\|\bm{\delta}_\mu\|_{1(\;\rm or\;2)}\le 4\delta+\mathcal{O}(\delta^2)$. As a result, we shall omit the subscript and just denote the norms as $\|\cdot\|$. Before proceeding, we need the following formula. 
\begin{lemma} Let $w\in \mathbb{N}$, and suppose $\|\bm{\delta}_p\|_1=2\delta<1$. Then,
    $(\mathbf{1}-\bm{\delta}_p)^{\ast -1}=\sum_{l=0}^{\infty}\bm{\delta}_p^{\ast l}=\sum_{l=0}^{w-1}\bm{\delta}_p^{\ast w}+\mathcal{O}\left(\frac{(2\delta)^w}{1-2\delta}\right) $.
\end{lemma}

\begin{proof}
    It is necessary to show that such a geometric series is well-defined. 
    Given $\bfx\in \mathbb{R}^{2^n}$. We denote $|\bfx|$ as the maximal absolute value of elements in $\bfx$. 
    $(\mathbf{1}-\bm{\delta})^{\ast -1}\ast (\mathbf{1}-\bm{\delta}_p)=\mathbf{1}\;\Rightarrow \widehat{(\mathbf{1}-\bm{\delta})^{\ast -1}}=\widehat{ (\mathbf{1}-\bm{\delta}_p)}^{-1}=(1-\widehat{\bm{\delta}_p})^{-1}=\sum_{l=0}^{\infty}\widehat{\bm{\delta}_p}^l=\sum_{l=0}^{w-1}\widehat{\bm{\delta}_p}^l+\bfa$, where $|\bfa|=\mathcal{O}((2\delta)^w)$ and the last two equations hold since $|\widehat{\bm{\delta}_p}|=|\sum_{\bfb}\bm{\delta}_{p\bfb}(-1)^{\bfa\cdot \bfb}|\le \sum_{\bfb}|\bm{\delta}_{p\bfb}|=2\delta<1$. Therefore, after the inverse transform, the equation becomes $(\mathbf{1}-\bm{\delta}_p)^{\ast -1}=\sum_{l=0}^{\infty}\bm{\delta}_p^{\ast l}=\sum_{l=0}^{w}\bm{\delta}_p^{\ast l}+\bfa'$, where $\bfa'=\sum_{l=w}^{\infty}\bm{\delta}_p^{\ast l}=\bm{\delta}_p^{\ast w}\ast\sum_{l=0}^{\infty}\bm{\delta}_p^{\ast l}$. We prove that the series is well-defined. Next, by Young's inequality, $\|\bfa'\|_1\le \left\|\bm{\delta}_p^{\ast w}\right\|_1\left\|\sum_{l=0}^{\infty}\bm{\delta}_p^{\ast l}\right\|_1\le\left\|\bm{\delta}_p\right\|^ w_1\sum_{l=0}^{\infty}\|\bm{\delta}_p\|_1^ l=\frac{\left\|\bm{\delta}_p\right\|^w_1}{1-\|\bm{\delta}_p\|_1}\le\frac{(2\delta)^w}{1-2\delta}$ 
\end{proof}
Now, we suppose $\delta<\frac{1}{4}$. Then we can rewrite the equation $p\ast p=\mu$
 to $(\mathbf{1}-\bm{\delta}_p)\ast p=\mu$ and hence given $w\in \mathbb{N}$,
\begin{align}\label{eq:inverse_p}
    p=(I-\bm{\delta}_p)^{\ast-1}\ast\mu&=\sum_{l=0}^{w-1}\bm{\delta}_p^{\ast l}\ast \mu+\mathcal{O}\left(\frac{(2\delta)^w}{1-2\delta}\right) \nonumber\\&=\sum_{l=0}^{w-1}\bm{\delta}_p^{\ast l}\ast \mu+\mathcal{O}\left((2\delta)^w\right). 
\end{align}

We can rewrite Eq.~\eqref{eq:inverse_p} as, 

\begin{align}
    &p=\sum_{l=0}^{w-1}\sum_{k=0}^{l}\comb{l}{k}(-1)^{k}p^{\ast k}\ast\mu+\mathcal{O}((2\delta)^w)\nonumber\\
    &=\sum_{l=0}^{w-1}\left\{\sum_{k=0}^{\lfloor\frac{l}{2}\rfloor}\comb{l}{2k}p^{\ast 2k}\ast\mu-\sum_{k=0}^{\lfloor\frac{l}{2}\rfloor}\comb{l}{2k+1}p^{\ast (2k+1)}\ast\mu\right\}\nonumber\\&+\mathcal{O}((2\delta)^w)\nonumber\\&=\sum_{l=0}^{w-1}\left\{\sum_{k=0}^{\lfloor\frac{l}{2}\rfloor}\comb{l}{2k} \mu^{\ast k+1}-\sum_{k=0}^{\lfloor\frac{l}{2}\rfloor}\comb{l}{2k+1}p\ast \mu^{\ast k+1}\right\}\nonumber\\&+\mathcal{O}((2\delta)^w),
\end{align}
\begin{widetext}

where the last equality used the convolution properties above and $p\ast p=\mu$. The rule that $\comb{0}{0}=1,\;\comb{a}{b}=0$ if $a<b$ has been applied. Therefore, we obtain that 
{\small
\begin{align}
    &p+\sum_{l=0}^{w-1}\sum_{k=0}^{\lfloor\frac{l}{2}\rfloor}\comb{l}{2k+1}\mu^{\ast k+1}\ast p=\sum_{l=0}^{w-1}\sum_{k=0}^{\lfloor\frac{l}{2}\rfloor}\comb{l}{2k}\mu^{\ast k+1}+\mathcal{O}((2\delta)^w)\nonumber\\
   &\Rightarrow \left\{\mathbf{1}+\sum_{l=0}^{w-1}\sum_{k=0}^{\lfloor\frac{l}{2}\rfloor}\comb{l}{2k+1}\mu^{\ast k+1}\right\}\ast p=\sum_{l=0}^{w-1}\sum_{k=0}^{\lfloor\frac{l}{2}\rfloor}\comb{l}{2k}\mu^{\ast k+1}+\mathcal{O}((2\delta)^w)\nonumber\\&\Rightarrow p=\left\{\mathbf{1}+\sum_{l=0}^{w-1}\sum_{k=0}^{\lfloor\frac{l}{2}\rfloor}\comb{l}{2k+1}\mu^{\ast k+1}\right\}^{-1}\ast\left(\sum_{l=0}^{w-1}\sum_{k=0}^{\lfloor\frac{l}{2}\rfloor}\comb{l}{2k}\mu^{\ast k+1}+\mathcal{O}((2\delta)^w)\right).\nonumber\\&\Rightarrow
   p=\left\{\mathbf{1}+\sum_{l=0}^{w-1}\sum_{k=0}^{\lfloor\frac{l}{2}\rfloor}\sum_{m=0}^{k+1}\comb{l}{2k+1}\cdot\comb{k+1}{m}(-1)^{m}\bm{\delta}_\mu^{\ast m}\right\}^{-1}\ast \left(\sum_{l=0}^{w-1}\sum_{k=0}^{\lfloor\frac{l}{2}\rfloor}\comb{l}{2k}\mu^{\ast k+1}+\mathcal{O}((2\delta)^w)\right)
\end{align}
}

\begin{align}\label{eq_noise_solution_another}
     &=\left\{\mathbf{1}+\sum_{m=1}^{\lfloor\frac{w-1}{2}\rfloor+1}(-1)^m\frac{\sum_{l=2\lfloor\frac{m}{2}\rfloor}^{w-1}\sum_{k=m-1}^{\lfloor\frac{l}{2}\rfloor}\comb{l}{2k+1}\cdot\comb{k+1}{m}\bm{\delta}_\mu^{\ast m}}{1+\sum_{l=0}^{w-1}\sum_{k=0}^{\lfloor\frac{l}{2}\rfloor}\comb{l}{2k+1}}\right\}^{-1}\left(\frac{\sum_{l=0}^{w-1}\sum_{k=0}^{\lfloor\frac{l}{2}\rfloor}\comb{l}{2k}\mu^{\ast k+1}}{1+\sum_{l=0}^{w-1}\sum_{k=0}^{\lfloor\frac{l}{2}\rfloor}\comb{l}{2k+1}}+\mathcal{O}\left(\delta^w\right)\right),
\end{align}
where we used the summation indices equivalence,
{\small
\begin{align}
	\sum_{l=0}^{w-1}\sum_{k=0}^{\lfloor\frac{l}{2}\rfloor}\sum_{m=0}^{k+1}=\sum_{m=0}^{w-1}\sum_{m=0}^{\lfloor\frac{l}{2}\rfloor+1}\sum_{k=m-1}^{\lfloor\frac{l}{2}\rfloor}=\sum_{m=0}^{\lfloor\frac{w-1}{2}\rfloor+1}\sum_{l=2\lfloor\frac{m}{2}\rfloor}^{w-1}\sum_{k=m-1}^{\lfloor\frac{l}{2}\rfloor},
\end{align} 
}
and, 
\begin{align}
	1+\sum_{l=0}^{w-1}\sum_{k=0}^{\lfloor\frac{l}{2}\rfloor}\comb{l}{2k+1}=1+\sum_{l=0}^{w-1}\left(\sum_{j=0,{\rm odd}}^{l}\comb{l}{j}\right)=2^{w-1}.
\end{align}

Moreover, we note that for fixed $m\in \mathbb{N}$,
\begin{align}\label{eq:summationterm_inverse}
    \left\|\sum_{m=1}^{\lfloor\frac{w-1}{2}\rfloor+1}(-1)^m\frac{\sum_{l=2\lfloor\frac{m}{2}\rfloor}^{w-1}\sum_{k=m-1}^{\lfloor\frac{l}{2}\rfloor}\comb{l}{2k+1}\cdot\comb{k+1}{m}\bm{\delta}_\mu^{\ast m}}{1+\sum_{l=0}^{w-1}\sum_{k=0}^{\lfloor\frac{l}{2}\rfloor}\comb{l}{2k+1}}\right\|&=\left\|\sum_{m=1}^{\lfloor\frac{w-1}{2}\rfloor+1}(-1)^m d(w,m)\bm{\delta}_\mu^{\ast m}\right\|\nonumber\\
    &\le \sum_{m=1}^{\lfloor\frac{w-1}{2}\rfloor+1}\left(\max_{m\in [\lfloor\frac{w-1}{2}\rfloor+1]}\left\{d(w,m)^{1/m}\right\}\|\bm{\delta}_\mu\|_1\right)^m,
\end{align}
where we define the non-negative function $d(w,m)\equiv \frac{\sum_{l=2\lfloor\frac{m}{2}\rfloor}^{w-1}\sum_{k=m-1}^{\lfloor\frac{l}{2}\rfloor}\comb{l}{2k+1}\cdot\comb{k+1}{m}}{1+\sum_{l=0}^{w-1}\sum_{k=0}^{\lfloor\frac{l}{2}\rfloor}\comb{l}{2k+1}}$ for brevity. We also define $d_{\rm max}(w)\equiv \max_{m\in [\lfloor \frac{w-1}{2}\rfloor +1]}\left\{d(w,m)^{1/m}\right\}$. If $\delta< \frac{1}{12d_{\rm max}(w)}$ while $\|\bm{\delta}_\mu\|_1<4\delta<\frac{1}{3d_{\rm max}(w)}$, then the last geometric series of Eq.~\eqref{eq:summationterm_inverse} becomes lower than $\frac{1}{2}$. The matrix norm of the inverse part in Eq.~\eqref{eq_noise_solution_another} is then upper bounded by $\mathcal{O}(1)$. By using the fact that $\sum_{m=1}^{w-1}(-1)^{m}d(w,m)K^m(\mu)$ is diagonalizable ($\because$ symmetric) and that the largest magnitude of eigenvalues is upper-bounded by the matrix norm, the inverse part can be well-defined as an infinite series. The $d(w,m)$ can be efficiently calculated and hence is $d_{\rm max}(w)$. Fig.~\ref{fig:d_degree_approx} (a) shows that $d_{\rm max}(w)$ scales as $\frac{w}{4}$. That is, the right hand side of Eq.~\eqref{eq:summationterm_inverse} is at most $\frac{w\delta}{1-w\delta}\le\frac{w\delta}{1-\frac{1}{3}}=\frac{3w\delta}{2}$ and $\delta<\frac{1}{3w}$ is a sufficient condition for the infinite series expansion. 

We fix another number $s\in \mathbb{N}$. The infinite series expansion of the inverse can be truncated up to $(w+s)$ number of terms so that the bias is $\mathcal{O}(\left(\frac{3w\delta}{2}\right)^{w+s})$.

We also note that $\left\|\frac{\sum_{l=0}^{w-1}\sum_{k=0}^{\lfloor\frac{l}{2}\rfloor}\comb{l}{2k}\mu^{\ast k+1}}{\sum_{l=0}^{w-1}\sum_{k=0}^{\lfloor\frac{l}{2}\rfloor}\comb{l}{2k+1}}\right\|\le \mathcal{O}(1)$. In conclusion, Eq.~\eqref{eq_noise_solution_another} is rewritten into the final form,  

\begin{align}\label{eq_noise_solution_another_final}
    p=\sum_{t=0}^{w+s-1}(-1)^t\left\{\sum_{m=1}^{\lfloor\frac{w-1}{2}\rfloor+1}(-1)^m\frac{\sum_{l=2\lfloor\frac{m}{2}\rfloor}^{w-1}\sum_{k=m-1}^{\lfloor\frac{l}{2}\rfloor}\comb{l}{2k+1}\cdot\comb{k+1}{m}\bm{\delta}_\mu^{\ast m}}{1+\sum_{l=0}^{w-1}\sum_{k=0}^{\lfloor\frac{l}{2}\rfloor}\comb{l}{2k+1}}\right\}^{t}\ast\frac{\sum_{l=0}^{w-1}\sum_{k=0}^{\lfloor\frac{l}{2}\rfloor}\comb{l}{2k}\mu^{\ast k}}{1+\sum_{l=0}^{w-1}\sum_{k=0}^{\lfloor\frac{l}{2}\rfloor}\comb{l}{2k+1}}\nonumber\\+\mathcal{O}\left(\left(\frac{3w\delta}{2}\right)^{w+s}+\left(\delta^w\right)\right).
\end{align}
 
Approximated term should be rewritten via $\bm{\delta}_\mu=1-\mu$ so that following expression be found by sampling from $\mu$. 
\begin{figure}[t]
    
    \centering

    \includegraphics[width=10cm]{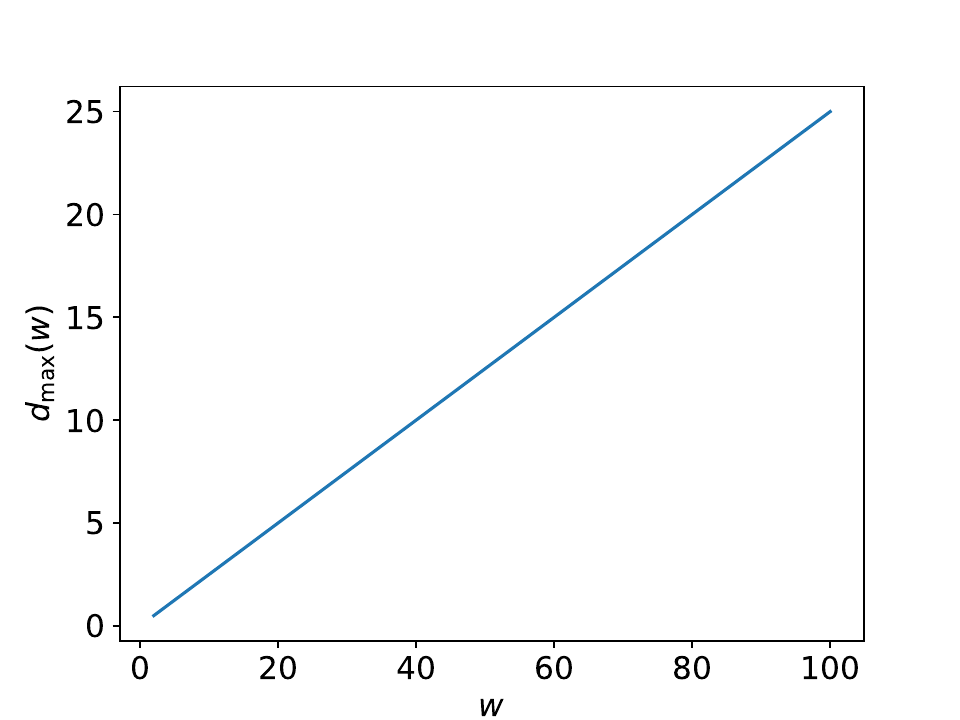}

    \caption{Scaling of $d_{\rm max}(w)\equiv \max_{m\in [\lfloor\frac{w-1}{2}\rfloor+1]}\left\{d(w,m)^{1/m}\right\}$ in Eq.~\eqref{eq:summationterm_inverse}. The maximum happens when $m=1$
    %Scaling of the $\max\left\{(c_j)\right\}^2$ in Eq.~\eqref{eq:supp_approx_series} as we increase the degree of approximation $(w,s)$. 
}\label{fig:d_degree_approx}
    
\end{figure}

Let us give some examples. If we choose $(w,s)=(2,0)$, we have the condition $\delta<0.166\ldots$. Then, Eq.~\eqref{eq_noise_solution_another_final} reduces to 
\begin{align}
    p=\left(I-\frac{1}{2}\bm{\delta}_\mu\right)^{-1}\left(\mu+\mathcal{O}((2\delta)^2)\right)=\left(\frac{3}{2}I-\frac{1}{2}\mu\right)\ast\mu+\mathcal{O}((3\delta)^2+(2\delta)^2)=\frac{1}{2}\left(3\mu^{\ast 1}-\mu^{\ast 2}\right)+\mathcal{O}((3\delta)^2). 
\end{align}

If we choose $(w,s)=(3,0)$ ($\delta<0.111\ldots$), then we have
\begin{align}
    p=\left(I-\frac{3}{4}\bm{\delta}_\mu\right)^{-1}\ast\frac{1}{4}\left(3\mu+\mu\ast \mu\right)+\mathcal{O}(\delta^3)&=\left(I+\frac{3}{4}\bm{\delta}_\mu+\frac{9}{16}\bm{\delta}_\mu^2\right)\ast\frac{1}{4}\left(3\mu+\mu\ast \mu\right)+\mathcal{O}((4.5\delta)^3+\delta^3)\nonumber\\&=\left(\frac{37}{16}I-\frac{15}{8}\mu+\frac{9}{16}\mu^{\ast 2}\right)\ast\frac{1}{4}\left(3\mu+\mu\ast \mu\right)+\mathcal{O}((4.5\delta)^3)\nonumber\\&=\frac{111}{64}\mu^{\ast 1}-\frac{53}{64}\mu^{\ast 2}-\frac{3}{64}\mu^{\ast 3}+\frac{9}{64}\mu^{\ast 4}+\mathcal{O}((4.5\delta)^3).
\end{align}
We can see that the condition of $\delta$ is relaxed to be $\delta<\frac{1}{4}$ for $w=2$, and $\delta<\frac{1}{6}$ for $w=3$.
In the same manner, 
\begin{align}
    p&=1.65527344\mu^{\ast 1}-0.237304688\mu^{\ast 2}-1.18847656\mu^{\ast 3}+0.83307812\mu^{\ast 4}+0.0517578125\mu^{\ast 5}-0.0947265625\mu^{\ast 6}\nonumber\\&-0.0185546875\mu^{\ast 7}-0.0009756525\mu^{\ast8}+\mathcal{O}((6\delta)^4)\;({\rm if}\;\delta<0.8333\ldots)\nonumber\\&
    =1.37901783\mu^{\ast 1}+1.03779793\mu^{\ast 2}-2.91566753\mu^{\ast 3}+0.900688171\mu^{\ast 4}+1.32462502\mu^{\ast 5}-0.464344025\mu^{\ast 6}\nonumber\\&-0.370359421\mu^{\ast 7}+0.041007996\mu^{\ast 8}+0.059091568\mu^{\ast 9}+0.010728836\mu^{\ast10}+0.00059604\mu^{\ast 11}+\mathcal{O}((7.5\delta)^5)\nonumber\\&({\rm if}\;\delta<0.0666\ldots).
\end{align}
The $\mathcal{O}((w\delta)^w)$ bias could be a serious factor in the inaccuracy of the estimation. In this case, we can increase the $s$. For instance, when $w=5$ and $s=1$ (or $s=2$), 
\begin{align}\label{eq:approx_much_higher}
    p&
    =1.60532922\mu^{\ast 1}+0.736049414\mu^{\ast 2}-3.75050509\mu^{\ast 3}+2.34237552\mu^{\ast 4}+1.59005195\mu^{\ast 5}-1.54467821\mu^{\ast 6}\nonumber\\&-0.470942257\mu^{\ast 7}+0.415027142\mu^{\ast 8}+0.142522156\mu^{\ast 9}-0.0376999378\mu^{\ast10}-0.02361834\mu^{\ast 11}-0.00372529\mu^{\ast 12}\nonumber\\&-0.00018626
    \mu^{\ast 13}+\mathcal{O}((7.5\delta)^6+\left(\delta^5\right))\nonumber\\&
    =1.81749614\mu^{\ast 1}+0.3311715566\mu^{\ast 2}-4.41529479\mu^{\ast 3}+4.31002729\mu^{\ast 4}+1.19872184\mu^{\ast 5}-3.1739106\mu^{\ast 6}\nonumber\\&+0.0270242803\mu^{\ast 7}+1.16613880\mu^{\ast 8}+0.021392014\mu^{\ast 9}-0.254116021\mu^{\ast 10}-0.043117907\mu^{\ast 11}+0.023052337\mu^{\ast 12}\nonumber\\&+0.009534415
    \mu^{\ast 13}+0.0012805685
    \mu^{\ast 14}+0.0000582077
    \mu^{\ast 15}+\mathcal{O}((7.5\delta)^7+\left(\delta^5\right))\;({\rm if}\;\delta<0.0666\ldots).
\end{align}

We can do this similarly for higher-order bias. For $(w,s)$-approximation, the calculation encounters at most $\left(\lfloor\frac{w-1}{2}\rfloor+(\lfloor\frac{w-1}{2}\rfloor+1)(w+s-1)\right)$th-iterative convolution, where $(\lfloor\frac{w-1}{2}\rfloor+1)(w+s-1)$ comes from the inverse in Eq.~\eqref{eq_noise_solution_another}. Therefore, we obtain the general expression, 
\begin{align}\label{eq:supp_approx_series}
    p=\sum_{j=1}^{\lfloor\frac{w+1}{2}\rfloor+(\lfloor\frac{w-1}{2}\rfloor+1)(w+s-1)}c_j \mu^{\ast j}+\mathcal{O}\left(\left(\frac{3w\delta}{2}\right)^{w+s}+\left(\delta^w\right)\right),\;{\rm where}\;\forall c_j\in \mathbb{R}.
\end{align}
We see that $1=\sum_{\bfa\in \bfZ^n_2}\lim_{\delta\rightarrow 0}p_\bfa=\sum_{j=0}^{\lfloor\frac{w+1}{2}\rfloor+(\lfloor\frac{w-1}{2}\rfloor+1)(w+s-1)}c_j \sum_{\bfa\in \bfZ^n_2}\mu^{\ast j}_\bfa=\sum_{j=0}^{\lfloor\frac{w+1}{2}\rfloor+(\lfloor\frac{w-1}{2}\rfloor+1)(w+s-1)}c_j$. Therefore, the coefficients of the approximate solution are quasi-probabilistic. Indeed, we can observe that all the above examples of low-degree approximations are quasi-probabilistic. 
\end{widetext}

\section{Projecting quasi-probability onto probability simplex~\cite{wang2013}}\label{app:projecting_quasi_simplex_l2}

Here, we explain the algorithm to project the quasi-probability to the simplex. We suppose a quasi-probability $p'$ of support $|A|\in \bfZ^n_2$ is obtained. Readers refer to Ref.~\cite{wang2013} for its proof. The algorithm is as follows:
\begin{enumerate}
    \item We sort $p'$ into descending order that is, $p'\rightarrow p'^{\downarrow}$.
    \item We find 
    \begin{align}
         \eta=\max\left\{1\le j\le 2^n:p'^{\downarrow}_j+\frac{1}{j}\left(1-\sum_{
    i=1
    }^{j}p'^{\downarrow}_i\right)>0\right\}
    \end{align}
    
    \item We set $\lambda=\frac{1}{\eta}\left(1-\sum_{i=1}^{\eta}p'^{\downarrow}_i\right)$
    \item 
    Output $p^{(+)}$ as $p^{(+)}_{\bfa}=\max\left\{p'_{\bfa}+\lambda,0\right\}$.
\end{enumerate}
The sorting algorithm Step.~$1$ needs $\mathcal{O}(|A|\log(|A|))$-time and memory costs. Next, in the step.~$2$, we first check $j$'s up to when $p'^{\downarrow}_j\ge 0$. Ignoring the interval of $p'^{\downarrow }_j=0$, we may jump to the negative region. Therefore, $\eta$ can be found in $\mathcal{O}(|A|)$-time. The $\lambda$ and output can also be computed in $\mathcal{O}(|A|)$-time because the zero interval gets the same value $\lambda$.  
Conclusively, the total time and memory complexity is $\mathcal{O}(|A|\log(|A|))$. Since the probability simplex is convex, the projection merely reduces the distance from $p$ that is, $\|p^{(+)}-p\|_2\le\|p'-p\|_2$~\cite{borwein2006}.

\section{Complexity analysis of CPM}\label{method:sampling_complexity}

In this section we prove Prop.~\ref{main:prop_sparsity_general}.

    We first prove (i). For convenience, we define for $m\in \mathbb{N}$, 
    \begin{align}
        mA\equiv \left\{\bfa|\bfa'=\bfa_1+\bfa_2+\ldots+\bfa_m\;{\rm for}\; \bfa,\ldots,\bfa_m\in A\right\}.
    \end{align}
    We suppose $\sum_{\bfa\in A}\mu_{\bfa}\ge 1-\epsilon$, and note that $\mu^{\ast j}=\sum_{\substack{\bfa_1,\bfa_2,\ldots,\bfa_{j+1}\in \bfZ^n_2\\\bfa_1+\bfa_2+\ldots+\bfa_{j+1}=\bfu}}\left(\prod_{i=1}^{j+1}\mu_{\bfa_i}\right)$. Then we obtain 
{\small
    \begin{align}
        &\sum_{\bfu\notin (j+1)A}\mu^{\ast j}_{\bfu}=\sum_{\bfu\notin (j+1)A}\left(\sum_{\substack{\bfa_1,\bfa_2,\ldots,\bfa_{j+1}\in \bfZ^n_2\\ \bfa_1+\bfa_2+\ldots+\bfa_{j+1}=\bfu,\bfa_{j+1}\in A}}\left(\prod_{i=1}^{j+1}\mu_{\bfa_i}\right)\right.\nonumber\\&\left.+\sum_{\substack{\bfa_1,\bfa_2,\ldots,\bfa_{j+1}\in \bfZ^n_2\\\bfa_1+\bfa_2+\ldots+\bfa_{j+1}=\bfu,\bfa_{j+1}\notin A}}\left(\prod_{i=1}^{j+1}\mu_{\bfa_i}\right)\right)\nonumber
    \end{align}
}
\begin{align}\label{eq:supp_CPM_expansion_coefficient}
     &\le \sum_{\bfu\notin (j+1)A}\left(\sum_{\substack{\bfa_1,\bfa_2,\ldots,\bfa_{j+1}\in \bfZ^n_2,\bfa_{j+1}\in A \\\bfa_1+\bfa_2+\ldots+\bfa_{j+1}=\bfu}}\left(\prod_{i=1}^{j+1}\mu_{\bfa_i}\right)\right)\nonumber\\&+\sum_{\bfu\in \bfZ^n_2}\left(\sum_{\substack{\bfa_1,\bfa_2,\ldots,\bfa_{j+1}\in \bfZ^n_2,\bfa_{j+1}\notin A\\\bfa_1+\bfa_2+\ldots+\bfa_{j+1}=\bfu}}\left(\prod_{i=1}^{j+1}\mu_{\bfa_i}\right)\right) \nonumber\\&\le \sum_{\bfu\notin jA}\left(\sum_{\substack{\bfa_1,\bfa_2,\ldots,\bfa_{j+1}\in \bfZ^n_2,\bfa_{j+1}\in A \\\bfa_1+\bfa_2+\ldots+\bfa_{j}=\bfu}}\left(\prod_{i=1}^{j}\mu_{\bfa_i}\right)\cdot \mu_{\bfa_{j+1}}\right)+\epsilon\nonumber\\&= \sum_{\bfu\notin jA}\mu^{\ast (j-1)}_{\bfu}+\epsilon,
    \end{align}

    where the third inequality is derived by that $\bfa_{j+1}+\bfu\notin jA$ for $\bfa_{j+1}\in A$ otherwise $\bfu\in (j+1)A$. By induction, $\sum_{\bfu\in A'}\mu^{\ast j}_{\bfu}\ge 1-(j+1)\epsilon$ for some $A'\subset \bfZ^n_2$ with size $\mathcal{O}(|A|^{j+1})$.  

    Next, we prove (ii). Suppose $\mu'$ is an empirical distribution of a desired one $\mu$ after $N$-number of sampling. By (i), $\sum_{\bfa\in (j+1)A}\mu^{\ast j}_{\bfa}\ge 1-(j+1)\epsilon$. Furthermore, 
    \begin{align}
        &\sum_{\bfa\in \bfZ^n_2}\left|\mu^{\ast j}_{\bfa}-\mu^{\ast j'}_{\bfa}\right|\nonumber\\&=\sum_{\bfa\in (j+1)A}\left|\mu^{\ast j}_{\bfa}-\mu^{\ast j'}_{\bfa}\right|+\sum_{\bfa\notin (j+1)A}\left|\mu^{\ast j}_{\bfa}-\mu^{\ast j'}_{\bfa}\right|\nonumber\\&\le \sum_{\bfa\in (j+1)A}\left|\mu^{\ast j}_{\bfa}-\mu^{\ast j'}_{\bfa}\right|+\sum_{\bfa\notin (j+1)A}\mu^{\ast j}_{\bfa}+\sum_{\bfa\notin (j+1)A}\mu^{\ast j'}_{\bfa}.
    \end{align}

    Now, we take large $N\ge M\cdot\mathcal{O}(\frac{|A|^{j+1}}{\epsilon^{'2}})$~\cite{Jerrum:1986random,berend2012} ($M\gg (1-j\epsilon')^{-1}$) such that $\left|\frac{N_{(j+1)A}}{N}-\sum_{\bfa\in (j+1)A}\mu^{\ast j}_{\bfa}\right|\le\epsilon'$ with failure probability $\delta_f\le \mathcal{O}\left(e^{-|A|^{j+1}}\right)$, where $N_{(j+1)A}\le N$ be the number of samples belonging to $(j+1)A$. Then we obtain that 
    \begin{align}\label{eq:ineq:total_to_portion}
        \sum_{\bfa\in \bfZ^n_2}\left|\mu^{\ast j}_{\bfa}-\mu^{\ast j'}_{\bfa}\right|&\le \sum_{\bfa\in (j+1)A}\left|\mu^{\ast j}_{\bfa}-\mu^{\ast j'}_{\bfa}\right|+(2j+2)\epsilon\nonumber+\epsilon'.
    \end{align}
    Now, let $x_{\bfa}$ be the number of samples with the outcome $\bfa$ ($\mu^{\ast j'}_{\bfa}=\frac{x_{\bfa}}{N}$). We also define the conditional probability $\mu^{\star j}_{\bfa}=\frac{\mu^{\ast j}_{\bfa}}{\sum_{\bfa'\in (j+1)A}\mu^{\ast j}_{\bfa'}}\;(\bfa\in (j+1)A)$. With a failure probability $\delta_f'=\mathcal{O}(e^{-M(1-j\epsilon')})$ and $\epsilon'\ll 1/j$, we note $N_{(j+1)A}\ge(1-j\epsilon)N\ge M(1-j\epsilon)\cdot\mathcal{O}(\frac{|A|^{j+1}}{\epsilon^{'2}})$ so that~\cite{berend2012,devroye1983}
    \begin{align}
        \sum_{\bfa\in (j+1)A}\left|\frac{x_{\bfa}}{N_{(j+1)A}}-\mu^{\star j}_{\bfa}\right|\le \epsilon'. 
    \end{align}
    We can rewrite the above equation as ($\exists\eta\in [-1,1]$), 
    \begin{align}
        &\sum_{\bfa\in (j+1)A}\left|\frac{x_{\bfa}}{N}-\frac{N_{(j+1)A}}{N}\mu^{\star j}_{\bfa}\right|\le \frac{N_{(j+1)A}}{N}\epsilon'.\nonumber\\&\Rightarrow \sum_{\bfa\in (j+1)A}\left|\frac{x_\bfa}{N}-\left(\sum_{\bfa\in (j+1)A}\mu^{\ast j}_{\bfa}+\eta\epsilon'\right)\mu^{\star j}_{\bfa}\right|\nonumber\\&\le\left(\sum_{\bfa\in(j+1)A}\mu^{\ast j}_{\bfa}+\eta\epsilon'\right)\epsilon',\nonumber\\
        &\Rightarrow \left|\sum_{\bfa\in(j+1)A}\left|\mu^{\ast j'}_{\bfa}-\mu^{\ast j}_{\bfa}\right|-|\eta|\epsilon'\sum_{\bfa\in (j+1)A}\mu^{\star j}_{\bfa}\right|\le\epsilon'+\epsilon^{'2}.
    \end{align}
    Therefore,
    \begin{align}
        &\sum_{\bfa\in(j+1)A}\left|\mu^{\ast j'}_{\bfa}-\mu^{\ast j}_{\bfa}\right|\le \left(1+|\eta|\sum_{\bfa\in(j+1)A}\mu^{\star j}_{\bfa}\right)(\epsilon'+\epsilon^{'2})\nonumber\\&\le 2\epsilon'+\mathcal{O}(\epsilon^{'2}). 
    \end{align}
    In conclusion, by Eq.~\eqref{eq:ineq:total_to_portion}, $\sum_{\bfa\in \bfZ^n_2}|\mu^{\ast j}_{\bfa}-\mu^{\ast j'}_{\bfa}|\le 3\epsilon'+(2j+2)\epsilon$ with failure probability $1-(1-\delta_f)(1-\delta_f')=\mathcal{O}(e^{-M})$.

    Lastly, we prove (iii).
    Suppose $\left\{q_{P}\right\}_{P\in\mathcal{P}_n}$ be the quasi-probability with which we sample $P$ from $\frac{|q_P|}{\sum_{P}|q_P|}$ ($P$ may be the collection of Paulis acting on each gate after the selection of tailoring gate $X^{\bfa}_{\psi}$). We consider a generalized case where we enact the recovery Pauli operation $P$ for the probabilistic error cancellation of a noise convolution circuit. We need this generalization to explain probabilistic error cancellation of a noisy convolution circuit. We denote the sampling copy as $(P, \bfu)$. It is sampled Pauli operator $P$ for mitigation and the measurement outcome $\bfu$ for the whole circuit. Here, $\bfu$ follows the $\bar{\mu}$, which is the measurement distribution after the noise and recovery. When the circuit is pure, $\sum|q_P|=1$ and only $(I,\bfu)$'s are sampled following $\mu \;(\mu=\bar{\mu})$. We utilize the proof technique of Hoeffding inequality in Hilbert space~\cite{devroye2013,pinelis1994} for our problem. Readers may refer to Ref.~\cite{devroye2013,pinelis1994,pinelis1986,yurinskiui1976} for its mathematical details. 

    We first consider when $j=1$. The case of $j>1$ follows a similar logic. Define the indicator function $\delta$ such that $ \delta(P,\bfv)_{\bfu}\equiv\delta_{\bfv,\bfu}\sgn(q_{P})\sum_{P}|q_{P}|$. Suppose we have the $N$ number of sampling copies $\left\{(P_1,\bfv_1),(P_2,\bfv_2),\ldots,(P_N,\bfv_N)\right\}$. Let us define the set of vector as $X_i\equiv \delta(P_i,\bfv_i)-\mu$. We note that the analytic mean over copies of ($P,\bfv$) for each $X_i$ is a zero vector. Furthermore, we obtain that 
    \begin{align}
        \sum_{i=1}^{N}\mathbb{E}(\|X_i\|_2^2)&\le N\sup_{i\in [N],(P_i,\bfv_i)}\left\{\|\delta(P_i,\bfv_i)-\mu\|_2^2\right\}\nonumber\\&\le N\left(1+\sum_{P}|q_P|\right)^2
    \end{align}
    Then by Cor. 1 in Ref.~\cite{pinelis1986}, we conclude that 
    {\small
    \begin{align}
         &P\left(\left\|\frac{1}{N}\sum_{i\in[N]}\delta(P_i,\bfv_i)-\mu\right\|_2\ge \frac{t}{N}\right)\nonumber\\&=P\left(\left\|\frac{1}{N}\sum_{i\in[N]}X_i\right\|_2\ge \frac{t}{N}\right)\nonumber\\
    &=P\left(\left\|\sum_{i\in[N]}X_i\right\|_2\ge t\right)\le 2e^{\frac{-t^2}{2N\left(1+\sum_{P}|q_P|\right)^2}},
    \end{align}}
    where $P(\cdot)$ means the probability to make $\cdot$ happen.
    
    Now, we assume $\frac{t}{N}=\epsilon$. Given that $N>M\cdot \mathcal{O}\left(\frac{\left(\sum_{P}|q_P|\right)^2}{\epsilon^2}\right)$ with sufficiently large constant $M$, then $t>1+\sum_P|q_P|$ and  $P\left(\left\|\frac{1}{N}\sum_{i=1}^{N}\delta(P_i,\bfv_i)-\mu\right\|_2>\epsilon\right)\le e^{-M\mathcal{O}(1)}$ which is negligible. 

    We can derive the similar result even if we sample $\bfu$ from the iterative convolution of $\mu$, following the same proof line. Specifically, we note that following holds,
\begin{widetext}
{\small
\begin{align}\label{eq:sampling_noisy_self_convolution}
    \mu^{\ast j}_{\bfu}&=\sum_{\substack{\bfu_1+\ldots+\bfu_{j+1}=\bfu\\\bfu_1,\ldots,\bfu_{j+1}\in \bfZ^n_2}}\left\{\sum_{P_1,\ldots,P_j\in \mathcal{P}_n}\left(\sum_{P\in \mathcal{P}_n}\frac{|q_P|}{\sum_{P\in \mathcal{P}_n}|q_P|}\right)^j\left(\sum_{P\in \mathcal{P}_n}|q_P|\right)^j\prod_{i=1}^j \left(\sgn(q_{P_i})\sum_{\bfv_i\in \bfZ^n_2}{\rm Prob}(\bfv_i|P_i)\delta_{\bfu_i,\bfv_i}\right)\right\}\nonumber\\&=\left(\sum_{P\in \mathcal{P}_n}|q_P|\right)^j\sum_{P_1,\ldots,P_j\in \mathcal{P}_n}\left(\sum_{P\in \mathcal{P}_n}\frac{|q_P|}{\sum_{P\in \mathcal{P}_n}|q_P|}\right)^j\nonumber\prod_{i=1}^j\left(\sgn(q_{P_i})\right)\sum_{\substack{\bfu_1+\ldots+\bfu_{j+1}=\bfu\\\bfu_1,\ldots,\bfu_{j+1}\in \bfZ^n_2}}\prod_{i=1}^{j}\left(\sum_{\bfv_i\in \bfZ^n_2}{\rm Prob}(\bfv_i|P_i)\delta_{\bfu_i,\bfv_i}\right)\nonumber\\&=\left(\sum_{P\in \mathcal{P}_n}|q_P|\right)^j\sum_{P_1,\ldots,P_j\in \mathcal{P}_n}\left(\sum_{P\in \mathcal{P}_n}\frac{|q_P|}{\sum_{P\in \mathcal{P}_n}|q_P|}\right)^j\prod_{i=1}^j\left(\sgn(q_{P_i})\right)\sum_{\substack{\bfu_1+\ldots+\bfu_{j+1}=\bfu\\\bfu_1,\ldots,\bfu_{j+1}\in \bfZ^n_2}}\prod_{i=1}^{j}\mu^{(P_i)}_{\bfu_i},
\end{align}
}
\end{widetext}
where $\mu^{(P_i)}$ is defined to be the Born probability of measurement outcomes after the post-processing of $P_i$ operations.

Therefore, the algorithm is as follows. We sample $P_1,P_2,\ldots,P_j$ i.i.d by the distribution of $\frac{|q_P|}{\sum_{P}|q_P|}$, and then we measure each circuit with the post-Pauli operations $P_i\;(i\in[j])$ to obtain $\bfv_i$. Then the estimator $\delta^{(j)}$ will be
\begin{align}\label{eq:overhead_noisy_convolution}
    \delta^{(j)}_\bfu=\left(\sum_{P\in \mathcal{P}_n}|q_P|\right)^j\prod_{i=1}^j\left(\sgn(q_{P_i})\right)\delta_{\bfu,\sum_{i=1}^{j}\bfv_i}.
\end{align}
Following the same logic of the proof of the case of $j=1$, we obtain the generalized condition for $P\left(\left\|\frac{1}{N}\sum_{i}\delta^{(j)}(P_i,\bfv_i)-\mu^{\ast j}\right\|_2>\epsilon\right)<e^{-M\mathcal{O}(1)}$, which is $N>M\cdot \mathcal{O}\left(\frac{\left(\sum_{P}|q_P|\right)^{2j}}{\epsilon^2}\right)$ with a sufficiently large constant $M$. Considering the pure case ($\sum_{P}|q_P|=1$), we prove (iii).

\section{Detection algorithm with noisy circuits}\label{method:detection_noisy_circuit}

This part introduces the probabilistic error cancellation (PEC) algorithm to the noisy circuit and discusses the required sampling copies. It is a well-known quantum error mitigation  (QEM)~\cite{endo2018,sun2021,suzuki2022,tsubouchi2024} method. In reality, the Clifford circuits in our main scheme are also noisy. Even though its degree is weaker than that of magic state preparation, if the desired accuracy is much lower than the noise of the Clifford circuit, this would harm the accuracy of noise detection. There is a way to solve this problem with the aid of error mitigation technique~\cite{endo2018,tsubouchi2023}. 
%\sun{I will rephrase as follows:}
%G: what do you mean by this? 
We remember that the tailoring (t) channel $\mathcal{U}_{{\rm t}}$ is described as 
\begin{align}
    \mathcal{U}_{{\rm t}}(\cdot)\equiv \frac{1}{2^n}\sum_{\bfa\in \bfZ^n_2}X^{\bfa}_{\psi}(\cdot)X^{\bfa}_{\psi}.
\end{align}

For each sampling $\bfa$, the channel $\mathcal{U}^{\bfa}_d$, which is defined as $X^{\bfa}_{\psi}$ along with the remaining (fixed) convolution channel, has a noise channel $\mathcal{N}^{\bfa}_c$. Then we can decompose its inverse into a quasi-probabilistic sum of free Pauli channels.
\begin{align}
    \mathcal{N}^{\bfa-1}_c(\cdot)=\sum_{P\in \mathcal{P}_n}q^{\bfa}_{P}P(\cdot)P,
\end{align}
where $\mathcal{P}_n$ is $n$-qubit Pauli group and $\forall q^{\bfa}_P\in\mathbb{R}$ satisfies $\sum_{P\in \mathcal{P}_n}q^{\bfa}_P=1$. Other free channels are also preferred to improve the efficiency~\cite{piveteau2022}. Then the desired Born probability $\mu$ is expressed by 

    \begin{align}
    \mu_{\bfu}&=\frac{1}{2^n}\sum_{\bfa}\tr{\mathcal{N}^{\bfa-1}_c\circ \mathcal{N}^{\bfa}_c
    \circ\mathcal{U}^{\bfa}_d(\rho^{\otimes 2})\left(\ket{\bfu}\bra{\bfu}\otimes I\right)} \nonumber
    \end{align}
\begin{align}    
   &=\frac{1}{2^n}\sum_{\bfa}\sum_{P\in \mathcal{P}_n}q^{\bfa}_{P}\tr{P\left(\mathcal{N}^{\bfa}_c\circ\mathcal{U}^{\bfa}_d(\rho^{\otimes 2})\right)P\cdot \ket{\bfu}\bra{\bfu}\otimes I}\nonumber\\&=\frac{1}{2^n}\sum_{\bfa}\sum_{P\in \mathcal{P}_n}\frac{|q^{\bfa}_P|}{\sum_{P\in \mathcal{P}_n}|q^{\bfa}_P|}\left(\sum_{P\in \mathcal{P}_n}|q^{\bfa}_P|\right)\sgn(q_P)\nonumber
    \\&\times\tr{P\left(\mathcal{N}^{\bfa}_c\circ\mathcal{U}^{\bfa}_d(\rho^{\otimes 2})\right)P\cdot \ket{\bfu}\bra{\bfu}\otimes I}\nonumber\\
    &=\frac{1}{2^n}\sum_{\bfa}\sum_{P\in \mathcal{P}_n}\frac{|q^{\bfa}_P|}{\sum_{P\in \mathcal{P}_n}|q^{\bfa}_P|}\sum_{\bfv\in \bfZ^n_2}\mathrm{tr}\left(P\left(\mathcal{N}^{\bfa}_c\circ\mathcal{U}^{\bfa}_d(\rho^{\otimes 2}))\right)\right.\nonumber\\&\left.\times P\ket{\bfv}\bra{\bfv}\otimes I\right)\left(\sum_{P\in \mathcal{P}_n}|q^{\bfa}_P|\right)\sgn(q_P)\delta_{\bfv,\bfu}.
\end{align}

Therefore, we conclude that 
\begin{align}
    \mu=\frac{1}{2^n}\sum_{\bfa}\sum_{P\in \mathcal{P}_n}\frac{|q^{\bfa}_P|}{\sum_{P\in \mathcal{P}_n}|q^{\bfa}_P|}\sum_{\bfv\in \bfZ^n_2}{\rm Prob}_{\bfa}(\bfv|P)\delta_{\bfa}(P,\bfv),
\end{align}
where ${\rm Prob}_{\bfa}(\bfv|P)$ is denoted as Born probability of outcome $\bfu$ after we sample the $\bfa$ and $P$ from the distribution $\frac{|q^{\bfa}_P|}{\sum_{P\in \mathcal{P}_n}|q^{\bfa}_P|}$. We also denote $ \delta_{\bfa}(P,\bfv)_{\bfu}\equiv\delta_{\bfu,\bfv}\sgn(q^{\bfa}_{P})\sum_{P}|q^{\bfa}_{P}|$. A similar sampling routine from the iterative convolution of $\mu$, and sampling complexity for $l_2$ accuracy are shown in Eqs.~\eqref{eq:sampling_noisy_self_convolution} and~\eqref{eq:overhead_noisy_convolution}.

\bibliographystyle{apsrev4-1}
\bibliography{sn-bibliography}

\end{document}